\newcommand{\x}{\mathbf{x}}
\newcommand{\y}{\mathbf{y}}
\newcommand{\al}{\boldsymbol{\alpha}}
\newcommand{\rxy}{|\x- \y|}
\newcommand{\rhs}{$\mathrm{r.\,h.\,s.}$ }
\newcommand{\lhs}{$\mathrm{l.\,h.\,s.}$ }
\newtheorem{theorem}{Theorem}[section]
\newtheorem{lemma}[theorem]{Lemma}
\newtheorem{proposition}[theorem]{Proposition}
\newtheorem{remark}[theorem]{Remark}
\newtheorem{assumption}[theorem]{Property}
\newtheorem{corollary}[theorem]{Corollary}
\DeclareMathOperator{\esssup}{ess~sup}
\DeclareMathOperator{\Spec}{Spec}
\begin{document}

\title{Exponential decay of eigenfunctions of Brown--Ravenhall operators}
\author{Sergey Morozov}
\address{Department of Mathematics, University College London, Gower Street, London, WC1E 6BT, UK}
\ead{morozov@math.ucl.ac.uk}
\ams{81V55, 81Q10}

\begin{abstract}
We prove the exponential decay of eigenfunctions of reductions of Brown--Ravenhall operators to arbitrary irreducible representations of rotation--reflection and permutation symmetry groups under the assumption that the corresponding eigenvalues are below the essential spectrum.
\end{abstract}

\section{Introduction}

The Brown--Ravenhall operator can be considered as the (multiparticle) Dirac operator projected to the positive spectral subspace of free particles.
This operator was introduced in \cite{BrownRavenhall1951} as a Hamiltonian of quantum electrodynamics (QED) correct to the second order in the fine structure constant (see also \cite{Sucher1980}).
The higher order corrections predicted by QED should thus be treated as perturbations. The Brown--Ravenhall model turns out to be a good candidate for this approach, as the recent rigorous results show. Indeed, it is bounded below even in the many--particle case for physically relevant nuclear charges \cite{EvansPerrySiedentop1996, BalinskyEvans1999, HoeverSiedentop1999}, and the structure of its spectrum resembles the one of Schr\"odinger operator -- the essential spectrum forms a semiaxis \cite{MorozovVugalter2006, Jakubassa2005, Jakubassa2007, Morozov2008}, possibly with some eigenvalues below ionization thresholds \cite{MorozovVugalter2006, Morozov2008}. This is in a remarkable contrast to the many--particle Coulomb--Dirac operator which has essential spectrum on the whole real axis and no eigenvalues, but is sometimes used as a formal unperturbed Hamiltonian in some QED calculations.

Having in mind the intention to consider the Brown--Ravenhall operator as an unperturbed intermediate model, it is very useful to have information on the rate of spatial decay of its eigenfunctions. In this article we prove that for systems of particles with electric charges of the same sign (we consider the potential energy of interactions with nuclei as external field) the eigenfunctions decay exponentially provided the corresponding eigenvalues are below the essential spectrum. This will also be proved for restrictions of the operator to subspaces of wavefunctions with certain rotation--reflection symmetries.

There are numerous results concerning the exponential decay of eigenfunctions of multiparticle Schr\"odinger operators, including anisotropic estimates and lower bounds. A very detailed analysis of the non--isotropic exponential decay of eigenfunctions of Schr\"odinger operators in terms of a metric in configuration space is presented in \cite{Agmon1982}. It is proved in \cite{CarmonaSimon1981} that the upper bound of \cite{Agmon1982} is exact at least for the ground state. A very simple proof of the exponential decay, based on the approach of \cite{Agmon1982} can be found in \cite{GriesemerLiebLoss2001}, Lemma 6.2. 

As for relativistic operators, exponential decay of eigenfunctions is proved for one--particle Chandrasekhar operators \cite{Nardini1986, CarmonaMastersSimon1990} and some projected multiparticle Dirac operators \cite{MatteStockmeyer2008preprint}. For one--particle Brown--Ravenhall atomic Hamiltonians the exponential decay of eigenfunctions was first obtained in \cite{BachMatte2001} for coupling constants of Coulomb potential not exceeding $\frac12$. In the recent preprint \cite{MatteStockmeyer2008preprint2} the exponential decay of bigger rate is shown to hold pointwise for all one--electron atoms with subcritical or critical coupling constants.

The paper is organized as follows.
In Section~\ref{model section} we introduce the Brown--Ravenhall model together with some auxiliary constructions and formulate the main result in Theorem~\ref{decay theorem}.
Then in Section~\ref{modified assumptions section} we discuss the relevant properties of the interaction potentials. The proof of Theorem~\ref{decay theorem} is presented in Section~\ref{proof of decay theorem section}, with the proofs of technical lemmata postponed until Sections~\ref{proof of the main lemma}---\ref{Q_4 proof section}. In Section~\ref{integral operators section} we prepare these proofs recalling two useful theorems which give sufficient conditions for the boundedness of integral operators. Appendix contains a couple of properties of modified Bessel functions for reference.

\section{The model and the main result}\label{model section}

In the Hilbert space $L_2(\mathbb{R}^3, \mathbb{C}^4)$ the Dirac operator describing a particle of mass $m> 0$ is given by
\begin{equation*}\label{Dirac}
D_m= -i\al\cdot\nabla+ \beta m,
\end{equation*}
where $\al:= (\alpha_1, \alpha_2, \alpha_3)$ and $\beta$ are the $4\times 4$ Dirac matrices \cite{Thaller1992}.
The form domain of $D_m$ is the Sobolev space $H^{1/2}(\mathbb{R}^3, \mathbb{C}^4)$ and its spectrum is $(-\infty, -m]\cup[m, +\infty)$. Let $\Lambda_m$ be the orthogonal projector onto the positive spectral subspace of $D_m$:
\begin{equation*}\label{Lambda_m}
\Lambda_m:=\frac{1}{2}+ \frac{-i\al\cdot\nabla+ \beta m}{2\sqrt{-\Delta+ m^2}}.
\end{equation*}
We consider a finite system of $N$ particles with positive masses $m_n, \: n= 1, \dots, N$. To simplify the notation we write $D_n$ and $\Lambda_n$ for $D_{m_n}$ and $\Lambda_{m_n}$, and also for their tensor products with the identity operators in $L_2(\mathbb R^3, \mathbb C^4)$, e. g.
\begin{equation*}
\underset{j= 1}{\overset{n- 1}\otimes}I\otimes D_{m_n}\otimes\underset{k= n+ 1}{\overset{N}\otimes}I \quad \textrm{and}\quad \underset{j= 1}{\overset{n- 1}\otimes}I\otimes \Lambda_{m_n}\otimes\underset{k= n+ 1}{\overset{N}\otimes}I,
\end{equation*}
respectively.

Let $\mathfrak{H}_N:= \underset{n= 1}{\overset{N}{\otimes}}\Lambda_nL_2(\mathbb{R}^3, \mathbb{C}^4)$ be the Hilbert space with the inner product induced by the one on $\underset{n= 1}{\overset{N}{\otimes}}L_2(\mathbb{R}^3, \mathbb{C}^4)\cong L_2(\mathbb{R}^{3N}, \mathbb{C}^{4^N})$.
In this space the $N$--particle Brown--Ravenhall operator is formally defined by
\begin{equation}\label{H_N'}
\mathcal{H}_N= \Lambda^N\bigg(\underset{n= 1}{\overset{N}{\sum}}(D_n+ V_n)+ \underset{n< j}{\overset{N}{\sum}}U_{nj}\bigg)\Lambda^N,
\end{equation}
with
\begin{equation*}\label{Lambda^N}
\Lambda^N:= \prod_{n= 1}^N\Lambda_n= \underset{n= 1}{\overset N\otimes}\Lambda_n.
\end{equation*}
Here the indices $n$ and $j$ indicate the particle on whose coordinates the corresponding operator acts.

In \eqref{H_N'} $V_n$ and $U_{nj}$ are the operators of multiplication by the potential energy of interactions of the particles of the system with an external field and between themselves, respectively.
In most applications to atomic and molecular physics Brown--Ravenhall operators are considered in the Born--Oppenheimer approximation. Then $V_n$ is the potential energy of the $n^{th}$ particle in the electrostatic field of static nuclei
\begin{equation}\label{V_n}
V_n(\x_n):= e_n\sum_{k= 1}^K\frac{z_k}{|\x_n- \mathbf r_k|},
\end{equation}
where $e_n$ is the electric charge of the particle, and $z_k$ and $\mathbf r_k$ are the charges and positions of the nuclei. The interaction between the particles is given by the Coulomb potential energy
\begin{equation}\label{U_nj}
U_{nj}(\x_n, \x_j):= \frac{e_ne_j}{|\x_n- \x_j|}.
\end{equation}
We will assume that all the particles of the system have the same sign of electric charges $e_n$, $n= 1, \dots, N$, but otherwise they might be different, as happens for \emph{exotic atoms}, where some electrons are replaced with muons or even hadrons. The spin of each particle is assumed to be equal to 1/2, as always with Dirac and Brown--Ravenhall operators. This implies that the particles of the system are fermions. According to the Pauli principle, if some of the particles are identical, the wavefunction of the system should be antisymmetric under their permutations. This means that the operator \eqref{H_N'} should be restricted to the subspace of $\mathfrak{H}_N$ consisting of functions which transform according to a certain irreducible representation $E$ of a subgroup $\Pi$ of the symmetric group $\mathcal S_N$ generated by transpositions of identical particles. Let $P^E$ be the orthogonal projector in $\mathfrak{H}_N$ onto the space of such functions. We will denote the restriction of $\mathcal H_N$ to $\mathfrak H^E:= P^E\mathfrak H$ by $\mathcal H_N^E$.

We will assume that the subcriticality condition
\begin{equation}\label{subcriticality}
\min_{n, k}e_nz_k>- 2(2/\pi+ \pi/2)^{-1}
\end{equation}
holds. According to \cite{BalinskyEvans1999}, $\mathcal{H}_N$ (and thus $\mathcal H_N^E$) is bounded below even if we replace the strict inequality in \eqref{subcriticality} by a non--strict. Violation of such non--strict inequality usually leads to the lack of boundedness below, as shown in \cite{EvansPerrySiedentop1996} for the case of single nucleus. As far as $\mathcal{H}_N$ (or any of its restrictions) is bounded below, it can be defined via the corresponding quadratic form.

It is convenient to reduce $\mathcal H_N^E$ using the rotation--reflection symmetries of the system.
Let $\gamma$ be an orthogonal transform in $\mathbb{R}^3$: the rotation around the axis directed along a unit vector $\mathbf{n}_\gamma$ through an angle $\varphi_\gamma$, possibly combined with the reflection $\x\mapsto -\x$. The corresponding unitary operator $O_\gamma$ acts on the functions $\psi\in\mathfrak{H}_N$ as (see \cite{Thaller1992}, Chapter 2)
\begin{equation*}\label{rotation}
(O_\gamma\psi)(\x_1, \dots, \x_N)= \prod_{n= 1}^Ne^{-i\varphi_\gamma\mathbf{n}_\gamma\cdot\mathbf{S}_n}\psi(\gamma^{-1}\x_1, \dots, \gamma^{-1}\x_N).
\end{equation*}
Here $\mathbf{S}_n= - \frac i4\alpha_n\wedge\alpha_n$ is the spin operator acting on the spinor coordinates of the $n^{th}$ particle.
The compact group of orthogonal transformations $\gamma$ such that $O_\gamma$ commutes with $V_n$ and $U_{nj}$ for all $n,\,j= 1, \dots, N$ (and thus with $\mathcal{H}_N^E$) we denote by $\Gamma$. Further, we decompose $\mathfrak{H}_N^E$ into the orthogonal sum
\begin{equation}\label{representations}
\mathfrak{H}_N^E= \underset{T\in \mathrm{Irr}\,\Gamma}{\oplus}\mathfrak{H}_N^{T, E},
\end{equation}
where $\mathfrak{H}_N^{T, E}$ consists of functions form $\mathfrak H_N^E$ which transform under $O_\gamma$ according to some irreducible representation $T$ of $\Gamma$. The decomposition \eqref{representations} reduces $\mathcal{H}_N^E$. We denote the selfadjoint restrictions of $\mathcal{H}_N^E$ to $\mathfrak{H}_N^{T, E}$ by $\mathcal{H}_N^{T, E}$. The spectrum of $\mathcal{H}_N^E$ is the union of the spectra of $\mathfrak{H}_N^{T, E}$, $T\in \mathrm{Irr}\,\Gamma$.

Together with the whole system of $N$ particles we will consider its decompositions into two clusters. Such decompositions play an important role in the characterization of the essential spectrum of the operators $\mathfrak{H}_N^{T, E}$.
Let $Z= (Z_1, Z_2)$ be a decomposition of the index set $I:= \{1, \dots, N\}$ into two disjoint subsets:
\begin{equation*}\label{decomposition}
I= Z_1\cup Z_2, \quad Z_1\cap Z_2= \varnothing.
\end{equation*}
Let
\begin{equation}\label{H_Z_1}
\widetilde{\mathcal{H}}_{Z, 1}:= \sum_{n\in Z_1}(D_n+ V_n)+ \sum_{\substack{n,j\in Z_1\\ n< j}}U_{nj},
\end{equation}
\begin{equation}\label{H_Z_2}
\widetilde{\mathcal{H}}_{Z, 2}:= \sum_{n\in Z_2}D_n+ \sum_{\substack{n,j\in Z_2\\ n< j}}U_{nj}.
\end{equation}
We introduce the operators corresponding to noninteracting clusters, with the second cluster transferred far away from the sources of the external field:
\begin{equation}\label{reduced spaces}
\mathcal{H}_{Z, j}:= \Lambda_{Z, j}\widetilde{\mathcal{H}}_{Z, j}\Lambda_{Z, j},\quad \textrm{in}\quad \mathfrak{H}_{Z, j}:= \underset{n\in Z_j}{\otimes}\Lambda_nL_2(\mathbb{R}^3, \mathbb{C}^4), \quad j= 1, 2,
\end{equation}
where
\begin{equation*}\label{Lambda_Zj}
\Lambda_{Z, j}:= \underset{n\in Z_j}{\prod}\Lambda_n= \underset{n\in Z_j}{\otimes}\Lambda_n.
\end{equation*}

For a given cluster decomposition $Z= (Z_1, Z_2)$ we denote by $P^{E_j}$ and $P^{T_j}$ the projectors onto the irreducible representations $E_j$ and $T_j$ of the restrictions of $\Pi$ and $\Gamma$, respectively, to the cluster of particles indexed by $Z_j$, $j= 1, 2$.

Given representations $T_j$ and $E_j$, projector $P^{T_j}P^{E_j}= P^{E_j}P^{T_j}$ reduces $\mathcal{H}_{Z, j}$. We denote the reduced operators in 
\begin{equation*}\label{H^T_j^E_j}
\mathfrak{H}_{Z, j}^{T_j, E_j}:= P^{T_j}P^{E_j}\mathfrak{H}_{Z, j}
\end{equation*}
by $\mathcal{H}_{Z, j}^{T_j, E_j}$, and define
\begin{equation}\label{E}
\varkappa_j(Z, T_j, E_j):= \inf \Spec\mathcal{H}_{Z, j}^{T_j, E_j}.
\end{equation}

We write $(T_1, E_1; T_2, E_2)\underset Z\prec (T, E)$ if the corresponding term cannot be omitted on the \rhs of
\begin{equation*}\label{symmetry embedding 2}
\mathfrak{H}_N^{T, E}\subset \underset{\substack{(T_1, E_1)\\ (T_2, E_2)}}{\oplus}\big(\mathfrak{H}_{Z, 1}^{T_1, E_1}\otimes\mathfrak{H}_{Z, 2}^{T_2, E_2}\big)
\end{equation*}
without violation of the inclusion.
For $Z_2\neq \varnothing$ let
\begin{equation}\label{bottom}\begin{split}
&\varkappa(Z, T, E):= \begin{cases}\underset{(T_1, E_1; T_2, E_2)\underset Z\prec (T, E)}\inf\big\{\varkappa_1(Z, T_1, E_1)+ \varkappa_2(Z, T_2, E_2)\big\}, &Z_1\neq \varnothing,\\ \varkappa_2(Z, T, E), &Z_1= \varnothing,\end{cases}
\end{split}\end{equation}
and
\begin{equation}\label{E(T)}
\varkappa(T, E):= \min\big\{\varkappa(Z, T, E): Z=(Z_1, Z_2), \: Z_2\neq \varnothing\big\}.
\end{equation}

We are now ready to characterize the essential spectrum of $\mathcal{H}_N^{T, E}$ in terms of cluster decompositions:
\begin{theorem}{\bfseries\em (Morozov \cite{Morozov2008}, Theorem~6)}\label{essential theorem}
For $N\in\mathbb N$ let $T$ be some irreducible representation of $\Gamma$, and $E$ some irreducible representation of $\Pi$, such that $P^TP^E\neq 0$.
The essential spectrum of $\mathcal{H}_N^{T, E}$ is $\big[\varkappa(T, E), \infty\big)$.
\end{theorem}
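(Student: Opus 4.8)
The plan is to prove the two inclusions $\sigma_{\mathrm{ess}}(\mathcal{H}_N^{T,E})\subseteq[\varkappa(T,E),\infty)$ and $[\varkappa(T,E),\infty)\subseteq\sigma_{\mathrm{ess}}(\mathcal{H}_N^{T,E})$ separately, following the HVZ--Persson scheme adapted to the non-local projected operator. The only genuinely new feature compared with the Schr\"odinger case is that the spectral projectors $\Lambda_n$ do not commute with cut-off functions in configuration space; however, $\Lambda_n$ is a convolution operator whose kernel decays at infinity, so the commutator of $\Lambda_n$ with a slowly varying cut-off $\chi$ is a bounded operator that is small in the relevant norms, and this is what makes the usual geometric arguments go through. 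Since $\Lambda^N=\Lambda_{Z,1}\otimes\Lambda_{Z,2}$ factorizes along any cluster decomposition $Z=(Z_1,Z_2)$, the cluster operators $\mathcal{H}_{Z,j}$ of \eqref{reduced spaces} fit into the same tensor-product picture as in the local theory, and boundedness below (guaranteed by \eqref{subcriticality}) makes the quadratic-form manipulations legitimate.

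For the inclusion $\sigma_{\mathrm{ess}}\subseteq[\varkappa(T,E),\infty)$ I would use a Persson-type formula, $\inf\sigma_{\mathrm{ess}}(\mathcal{H}_N^{T,E})=\sup_{K}\inf\{\langle\psi,\mathcal{H}_N^{T,E}\psi\rangle:\psi\in\mathfrak H_N^{T,E},\ \|\psi\|=1,\ \supp\psi\subset\mathbb R^{3N}\setminus K\}$, the outer supremum over compacts. Choose a Ruelle--Simon partition of unity $\{J_Z\}$ on the unit sphere of $\mathbb R^{3N}$, homogeneous of degree $0$ and rescaled to radius $R$, such that on $\supp J_Z$ the two clusters of $Z$ are separated by a distance of order $R$. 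Applying the IMS localization formula to the quadratic form one picks up, besides the usual $\sum_Z\||\nabla J_Z|\,\psi\|^2=O(R^{-2})$ term, commutator corrections built from $[\Lambda_n,J_Z]$, which are $o(1)$ as $R\to\infty$ by the kernel-decay estimate together with the form-boundedness of the Coulomb potentials \eqref{V_n}, \eqref{U_nj}. On $\supp J_Z$ the inter-cluster potentials $U_{nj}$ with $n\in Z_1,\ j\in Z_2$, and the external potentials $V_n$ with $n\in Z_2$, are $O(R^{-1})$, so after projecting onto the admissible sub-representations $J_Z\mathcal{H}_N^{T,E}J_Z\ge J_Z(\mathcal{H}_{Z,1}\otimes I+I\otimes\mathcal{H}_{Z,2})J_Z-o(1)\ge\big(\varkappa(Z,T,E)-o(1)\big)J_Z^2$, the relation $(T_1,E_1;T_2,E_2)\underset Z\prec(T,E)$ merely recording which sub-representations actually occur. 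Summing over $Z$ and letting $R\to\infty$ gives the lower bound $\varkappa(T,E)$.

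For the reverse inclusion I would construct, for every $Z$ with $Z_2\neq\varnothing$ and every $\lambda\ge\varkappa(Z,T,E)$, a singular Weyl sequence. When $Z_1\neq\varnothing$, fix sub-representations with $(T_1,E_1;T_2,E_2)\underset Z\prec(T,E)$ for which $\varkappa_1(Z,T_1,E_1)+\varkappa_2(Z,T_2,E_2)$ is as close to $\varkappa(Z,T,E)$ as desired, take an approximate ground state $\phi_1$ of $\mathcal{H}_{Z,1}^{T_1,E_1}$ at energy near $\varkappa_1$, and take $\phi_2$ in the spectral subspace of $\mathcal{H}_{Z,2}^{T_2,E_2}$ at energy near $\lambda-\varkappa_1$; this is possible because $\mathcal{H}_{Z,2}^{T_2,E_2}$ is translation invariant (cluster $2$ feels no external field), so its spectrum is the whole semiaxis $[\varkappa_2(Z,T_2,E_2),\infty)$. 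Set $\psi_k:=P^TP^E\big(\phi_1\otimes(\tau_{\mathbf a_k}\phi_2)\big)$ with $\tau_{\mathbf a_k}$ translating cluster $2$ by $\mathbf a_k$, $|\mathbf a_k|\to\infty$; the defining property of $\underset Z\prec$ guarantees $\|\psi_k\|$ stays bounded below, while $\psi_k\rightharpoonup 0$ since the supports escape to infinity in the inter-cluster variable. Using $\Lambda^N=\Lambda_{Z,1}\otimes\Lambda_{Z,2}$ and the decay of the inter-cluster interactions and of $V_n$ for $n\in Z_2$ one checks $\|(\mathcal{H}_N^{T,E}-\lambda)\psi_k\|\to0$. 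The case $Z_1=\varnothing$ is the degenerate situation in which the whole system is carried off and one reduces directly to $\mathcal{H}_{Z,2}^{T,E}$. Taking the union over all $Z$ and over $\lambda$ fills $[\varkappa(T,E),\infty)$.

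The main obstacle is the systematic control of the non-locality: one must quantify the decay of the kernels of $[\Lambda_n,J_Z]$ and of $\Lambda_n$ itself, and combine these with the operator bounds on the Coulomb terms so that every localization and decoupling error is provably $o(1)$ in the $R\to\infty$ (respectively $|\mathbf a_k|\to\infty$) limit, uniformly on the relevant form domains. A secondary, bookkeeping difficulty is keeping track of exactly which pairs $(T_1,E_1;T_2,E_2)$ genuinely appear in the decomposition of $\mathfrak H_N^{T,E}$ along $Z$ --- the content of $\underset Z\prec$ --- since this is what makes the symmetrized Weyl sequences nonzero and, conversely, prevents the Persson bound from being spuriously too small. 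This programme is carried out in detail in \cite{Morozov2008}.
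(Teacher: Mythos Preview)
The paper does not contain a proof of this theorem: it is quoted verbatim as Theorem~6 of \cite{Morozov2008} and used as an input, with Remark~\ref{generalized decay theorem} only checking that the hypotheses of that reference are met. There is therefore nothing in the present paper to compare your argument against.

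Your outline is the expected HVZ--Persson scheme for projected multiparticle operators --- geometric localization with commutator control of $[\Lambda_n,J_Z]$ for the lower bound, translated tensor-product Weyl sequences symmetrized by $P^TP^E$ for the upper bound --- and you yourself note in the last line that the details are in \cite{Morozov2008}. That is exactly the situation: this paper defers the proof entirely to that reference.
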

Thus the bottom of the essential spectrum is equal to the minimal energy which the system can have if some of the particles are transfered far away form other particles and sources of external field. We will omit the proof of the following simple proposition based on the positivity of the interaction potentials \eqref{U_nj}.

\begin{proposition}
It is enough to take the minimum in \eqref{E(T)} over $Z$ with $Z_2= \{n\}$, $n= 1, \dots N$. Moreover, for such $Z$, $\varkappa_2(Z, \cdot, \cdot)$ in \eqref{bottom} is equal to $m_n$, the mass of the particle in the second cluster. 
\end{proposition}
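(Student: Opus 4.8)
The plan is to prove the two assertions separately; both rely only on the interparticle potentials being non-negative, $U_{nj}\ge 0$, and on $V_n$ and $U_{nj}$ vanishing at spatial infinity.

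I would begin with the second assertion. For $Z$ with $Z_2=\{n\}$ the pair sum in \eqref{H_Z_2} is empty, so $\widetilde{\mathcal H}_{Z,2}=D_n$ and, by \eqref{reduced spaces}, $\mathcal H_{Z,2}=\Lambda_nD_n\Lambda_n$ in $\mathfrak H_{Z,2}=\Lambda_nL_2(\mathbb R^3,\mathbb C^4)$. This is the free Dirac operator on its positive spectral subspace, unitarily equivalent to $\sqrt{-\Delta+m_n^2}$ in $L_2(\mathbb R^3,\mathbb C^2)$, with spectrum $[m_n,\infty)$, absolutely continuous and with no gap at the bottom. Since $\Pi$ restricted to the one-element cluster $\{n\}$ is trivial and $\mathcal H_{Z,2}$ commutes with every $O_\gamma$, it remains only to verify that the reduction of $\mathcal H_{Z,2}$ to any irreducible representation $T_2$ of $\Gamma$ occurring in $\mathfrak H_{Z,2}$ still has infimum $m_n$. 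This I would obtain from a dilation argument: dilations commute with the $\Gamma$-action, hence preserve every such component, and they conjugate $\sqrt{-\Delta+m_n^2}$ into $\sqrt{\lambda^2(-\Delta)+m_n^2}$, which tends to $m_n$ as $\lambda\to 0$; thus the infimum over each occurring component is $m_n$, and $\varkappa_2(Z,T_2,E_2)=m_n$ for every pair entering \eqref{bottom}.

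For the first assertion the inequality $\varkappa(T,E)\le\min_n\varkappa\big((I\setminus\{n\},\{n\}),T,E\big)$ is immediate from \eqref{E(T)}, so only the reverse is at issue; for this it suffices to prove the monotonicity statement that, for every $Z=(Z_1,Z_2)$ with $Z_2\neq\varnothing$ and every $n\in Z_2$, writing $Z^{(n)}:=(I\setminus\{n\},\{n\})$, one has $\varkappa(Z^{(n)},T,E)\le\varkappa(Z,T,E)$. To prove it I would assume $Z_1\neq\varnothing$ (the case $Z_1=\varnothing$ is analogous and simpler) and fix $(T_1,E_1;T_2,E_2)\underset{Z}{\prec}(T,E)$ nearly realising $\varkappa(Z,T,E)$ in \eqref{bottom}. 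On the one hand, discarding the non-negative $U_{kl}$ from $\widetilde{\mathcal H}_{Z,2}$ gives, as quadratic forms on $\mathfrak H_{Z,2}$, the bound $\mathcal H_{Z,2}\ge\Lambda_{Z,2}\big(\sum_{k\in Z_2}D_k\big)\Lambda_{Z,2}$, whose infimum is $\sum_{k\in Z_2}m_k$; since passing to a symmetry subspace cannot lower the bottom of the spectrum, $\varkappa_2(Z,T_2,E_2)\ge m_n+\sum_{k\in Z_2\setminus\{n\}}m_k$. On the other hand, I would bound $\varkappa_1(Z^{(n)},\cdot,\cdot)$ from above by a trial function on the cluster $Z^{(n)}_1=Z_1\cup(Z_2\setminus\{n\})$: a normalised near-minimiser $\psi$ of $\mathcal H_{Z,1}^{T_1,E_1}$ tensored with normalised states $\phi_l\in\Lambda_lL_2(\mathbb R^3,\mathbb C^4)$, $l\in Z_2\setminus\{n\}$, with $\langle\phi_l,D_l\phi_l\rangle$ close to $m_l$, all translated so that their supports are pairwise far apart and far from the nuclei. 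As $\Lambda_l$ commutes with translations, $\Psi:=\psi\otimes\bigotimes_l\phi_l$ lies in $\mathfrak H_{Z^{(n)},1}$, and the decay of $V_l$ and of the $U_{kl}$ makes $\langle\Psi,\mathcal H_{Z^{(n)},1}\Psi\rangle$ as close as desired to $\varkappa_1(Z,T_1,E_1)+\sum_{l\in Z_2\setminus\{n\}}m_l$. Decomposing $\Psi$ over the irreducible representations of $\Gamma$ and $\Pi$ restricted to $Z^{(n)}_1$ and keeping a non-zero component of lowest energy, I obtain $\varkappa_1(Z^{(n)},T_1',E_1')\le\varkappa_1(Z,T_1,E_1)+\sum_{l\in Z_2\setminus\{n\}}m_l+o(1)$ for some $(T_1',E_1')$. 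Combining the two bounds with $\varkappa_2(Z^{(n)},\cdot,\cdot)=m_n$ from the second assertion then gives $\varkappa(Z^{(n)},T,E)\le\varkappa(Z,T,E)$.

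The step I expect to be the main obstacle is the bookkeeping of symmetry representations in this last argument: one must guarantee that the component $(T_1',E_1')$ carrying the low energy is compatible with $(T,E)$ in the sense that $(T_1',E_1';T_2',E_2')\underset{Z^{(n)}}{\prec}(T,E)$ for suitable $T_2',E_2'$ on $\{n\}$. I would arrange this by taking the auxiliary states $\phi_l$ within a symmetry subspace of the cluster $Z_2\setminus\{n\}$ that is $\prec$-linked to $(T_2,E_2)$ under the splitting of $Z_2$ into $Z_2\setminus\{n\}$ and $\{n\}$ — such a subspace exists by the very definition of $\underset{Z}{\prec}$ — and then invoking transitivity of the relation $\prec$ under refinement of cluster decompositions, i.e. the fact that the symmetry-adapted cluster inclusions compose. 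The remaining estimates are routine.
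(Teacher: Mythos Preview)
The paper actually omits the proof of this proposition entirely, remarking only that it is ``simple'' and ``based on the positivity of the interaction potentials \eqref{U_nj}''. So there is no proof in the paper to compare against; your proposal is to be judged on its own merits against that hint.

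Your two-part plan matches the hint. The second assertion is handled correctly: for $Z_2=\{n\}$ one has $\mathcal H_{Z,2}=\Lambda_nD_n\Lambda_n$, and your scaling argument (most transparently in the momentum representation, where the angular part fixes the $\Gamma$-type and the radial part can be concentrated near the origin) shows that every non-trivial $\Gamma$-component has infimum $m_n$. The first assertion also proceeds along the natural lines --- dropping the $U_{kl}\geqslant 0$ from $\widetilde{\mathcal H}_{Z,2}$ to get $\varkappa_2(Z,T_2,E_2)\geqslant\sum_{k\in Z_2}m_k$, and using widely separated trial states with the decay of $V_l,U_{kl}$ to bound $\varkappa_1(Z^{(n)},\cdot,\cdot)$ from above.

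You have correctly isolated the only genuinely delicate point: the symmetry bookkeeping. One remark on your handling of it. After decomposing the trial state $\Psi$ into $(T_1',E_1')$-components, you ``keep a non-zero component of lowest energy''; but the energy of $\Psi$ is only a \emph{convex combination} of the component energies, so the lowest-energy component need not be the one compatible with $(T,E)$ via $\prec_{Z^{(n)}}$. Your proposed fix --- choosing the auxiliary $\phi_l$ inside a symmetry class linked to $(T_2,E_2)$ by splitting $Z_2=\big(Z_2\setminus\{n\}\big)\cup\{n\}$, and then appealing to transitivity of $\prec$ under refinement of cluster decompositions --- is the right idea, but it needs to be carried out so that $\Psi$ lands \emph{a priori} in a single compatible $(T_1',E_1')$-block (or so that \emph{every} non-zero component of $\Psi$ is compatible), not merely that some component is. Once that is arranged, the rest is indeed routine.
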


As shown in \cite{MorozovVugalter2006}, the Brown--Ravenhall operators $\mathcal{H}_N^{T, E}$ can have eigenvalues below the essential spectrum. Note that in view of the decomposition \eqref{representations} these eigenvalues can be embedded in the essential spectrum of $\mathfrak{H}_N^E$.

Our main result is the following theorem.

\begin{theorem}\label{decay theorem}
For $N\in\mathbb N$ let $T$ be some irreducible representation of $\Gamma$, and $E$ some irreducible representation of $\Pi$, such that $P^TP^E\neq 0$.
Let $\phi$ be an eigenfunction of $\mathcal{H}_N^{T, E}$ corresponding to an eigenvalue $\lambda$ below the essential spectrum, i.e.
\begin{equation*}\label{eigenvalue equation}
\mathcal{H}_N^{T, E}\phi= \lambda\phi, \qquad \lambda< \varkappa(T, E).
\end{equation*}
Then there exists $S> 0$ independent of $\lambda$ and $\phi$ such that for
\begin{equation*}
s:= \min\Big\{\frac1{2\sqrt N}, \big(\varkappa(T, E)- \lambda\big)S\Big\}
\end{equation*}
it holds
\begin{equation}\label{decay statement}
\int_{\mathbb R^{3N}} \rme^{2s|\mathbf{X}|}\big|\phi(\mathbf{X})\big|^2\rmd\mathbf{X}< \infty.
\end{equation}
\end{theorem}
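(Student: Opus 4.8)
The plan is to adapt the Agmon-type commutator argument (as in \cite{GriesemerLiebLoss2001}, Lemma~6.2) to the nonlocal Brown--Ravenhall setting. Introduce for $\varepsilon>0$ the bounded weight $F_\varepsilon(\X):=s|\X|/(1+\varepsilon|\X|)$, so that $f_\varepsilon:=\rme^{F_\varepsilon}$ is bounded, Lipschitz, and $|\nabla F_\varepsilon|\le s\le \frac1{2\sqrt N}$ everywhere; it suffices to obtain a bound on $\|f_\varepsilon\phi\|$ that is uniform in $\varepsilon$ and then let $\varepsilon\downarrow 0$ by monotone convergence to get \eqref{decay statement}. Testing the eigenvalue equation against $f_\varepsilon^2\phi$ and using selfadjointness, one arrives at an identity of the schematic form
\begin{equation*}
\big\langle f_\varepsilon\phi,(\mathcal H_N^{T,E}-\lambda)f_\varepsilon\phi\big\rangle = \big\langle \phi, R_\varepsilon\,\phi\big\rangle,
\end{equation*}
where $R_\varepsilon$ collects the commutator terms $[f_\varepsilon,[f_\varepsilon,\mathcal H_N^{T,E}]]$ and related expressions. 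The left side is bounded below using a localized version of Theorem~\ref{essential theorem}: outside a large ball the operator $\mathcal H_N^{T,E}$ is, up to the weight-gradient error and a small remainder, bounded below by $\varkappa(T,E)$, so the left side dominates $\big(\varkappa(T,E)-\lambda-o(1)\big)\|f_\varepsilon\phi\|^2$ minus a contribution supported on a fixed compact set where $f_\varepsilon$ is bounded independently of $\varepsilon$. Rearranging yields $\|f_\varepsilon\phi\|^2\le C$ uniformly, provided the commutator remainder $R_\varepsilon$ can in turn be controlled by $\big(\varkappa(T,E)-\lambda\big)\|f_\varepsilon\phi\|^2$ plus a constant; this is exactly where the parameter $S$ and the precise form of $s$ enter.

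The principal difficulty, and the reason the argument is far more delicate than in the Schr\"odinger case, is that $D_n$, $\Lambda_n$, and hence $\mathcal H_N^{T,E}$ are \emph{nonlocal}: the naive commutator $[f_\varepsilon,\mathcal H_N^{T,E}]$ is not a multiplication operator and does not vanish where $\nabla F_\varepsilon$ is small. The kinetic part $\Lambda_n D_n\Lambda_n=\Lambda_n\sqrt{-\Delta+m_n^2}$ is a nonlocal pseudodifferential operator, and conjugating it by $f_\varepsilon$ produces an operator with an integral kernel that decays in $|\x-\y|$ but is not diagonal. The plan is to isolate this nonlocal error as a family of integral operators $Q_1,\dots,Q_4$ (matching the technical lemmata the introduction promises in Sections~\ref{proof of the main lemma}--\ref{Q_4 proof section}) and to prove each is bounded on $L_2$ with a norm that can be made small, using the Schur test and the Birman--Schwinger/Kato--Seiler--Simon type criteria recalled in Section~\ref{integral operators section}. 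The key analytic input is that the kernel of $f_\varepsilon(\x)\,\Lambda_n\sqrt{-\Delta+m_n^2}\,f_\varepsilon^{-1}(\y)-\Lambda_n\sqrt{-\Delta+m_n^2}$ is controlled by $\rme^{-c|\x-\y|}$ times a polynomially bounded factor, which holds precisely because $|\nabla F_\varepsilon|\le\frac1{2\sqrt N}<1$ keeps the conjugated symbol $\sqrt{(\xi+i\nabla F_\varepsilon)^2+m_n^2}$ away from its branch cut; the modified Bessel function estimates in the Appendix supply the needed pointwise bounds on these kernels.

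Concretely I would proceed in the following steps. First, replace the potentials by cut-off versions or invoke the properties established in Section~\ref{modified assumptions section} so that $V_n$, $U_{nj}$ are form-bounded and behave well under conjugation; the positivity $U_{nj}\ge0$ and the sign condition on the charges let the inter-particle terms only help. Second, establish the localized lower bound: for any $\delta>0$ there is $R_\delta$ such that for $\psi$ supported where $|\X|\ge R_\delta$, $\langle\psi,(\mathcal H_N^{T,E}-\varkappa(T,E)+\delta)\psi\rangle\ge0$ up to a controllable error --- this is the IMS/geometric-spectral-analysis step built on Theorem~\ref{essential theorem} and the cluster decomposition. Third, derive the commutator identity above rigorously on the form level, justifying all manipulations since $f_\varepsilon$ is bounded and $\phi$ lies in the form domain. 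Fourth, prove the four integral-operator lemmata bounding $R_\varepsilon$, extracting the factor $\big(\varkappa(T,E)-\lambda\big)$ by choosing $s$ proportional to it with constant of proportionality $S^{-1}$ determined by the operator norms, and bounding the remaining $1/(2\sqrt N)$ regime by the first term. Finally, combine everything to get $\sup_\varepsilon\|f_\varepsilon\phi\|<\infty$ and pass to the limit $\varepsilon\to0$. I expect the fourth step --- the uniform-in-$\varepsilon$ boundedness and smallness of the nonlocal commutator operators --- to be by far the main obstacle, as it requires careful kernel estimates that simultaneously track decay in $|\x-\y|$, the dependence on $\varepsilon$, and the dependence on $s$, and it is there that the explicit constant $2(2/\pi+\pi/2)^{-1}$ from the subcriticality condition \eqref{subcriticality} and the threshold $\frac1{2\sqrt N}$ for $s$ will have to be reconciled.
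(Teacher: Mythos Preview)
Your Agmon-type strategy is the right genre, but the paper's proof is organized around one structural idea you do not use, and your substitute for it is the weak point of your plan.

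\textbf{Single-particle versus full-configuration weight.} The paper does \emph{not} use a weight $F(\X)=s|\X|$. It first reduces \eqref{decay statement} to the $N$ separate estimates $\int \rme^{2a|\x_n|}|\phi|^2<\infty$ via the elementary inequality $|\X|\le\sqrt N\max_n|\x_n|$ (Lemma~\ref{s to a lemma}); the factor $1/\sqrt N$ in $s$ comes from this reduction, not from any branch-cut condition on a conjugated symbol. Then, for each $n$ (say $n=1$), it uses a regularized weight $f=f(\x_1)$ depending on the single coordinate $\x_1$. This buys two things. First, $\rme^{af}$ commutes with $D_k$, $\Lambda_k$, $V_k$ for $k\ge2$ and with all $U_{kj}$ for $k,j\ge2$, so the only nontrivial commutators are $[\Lambda_1,\rme^{af}]$ and $[D_1,\rme^{af}]$, i.e.\ a one-particle problem. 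Second, and crucially, the lower bound $\varkappa(T,E)$ is obtained \emph{without any IMS localization}: since $\Lambda_1\rme^{af}\phi$ still lies in $\mathfrak H_N^{T,E}$ (the weight does not disturb the symmetry projections, and $\Lambda_k$ for $k\ge2$ are already applied), one simply invokes the cluster decomposition $Z_0=(\{2,\dots,N\},\{1\})$ to get $\langle\Lambda_1\rme^{af}\phi,(\mathcal H_{Z_0,1}+\mathcal H_{Z_0,2})\Lambda_1\rme^{af}\phi\rangle\ge\varkappa(T,E)\|\Lambda_1\rme^{af}\phi\|^2$ directly (inequality \eqref{varkappa from below}). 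The four error terms $Q_1,\dots,Q_4$ then arise from unwinding $(\mathcal H_{Z_0,1}+\mathcal H_{Z_0,2})\Lambda_1\rme^{af}\phi$ back to $\rme^{af}\mathcal H_N\phi=\lambda\rme^{af}\phi$; each $Q_l$ is bounded by $C_la\|\rme^{af}\phi\|^2$ (plus, for $Q_4$, a harmless $C_0(a)\|\phi\|_{H^{1/2}}^2$), and one takes $a<(\sum C_l)^{-1}(\varkappa-\lambda)$.

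\textbf{Where your plan is exposed.} Your step ``for $\psi$ supported where $|\X|\ge R_\delta$, $\langle\psi,(\mathcal H_N^{T,E}-\varkappa+\delta)\psi\rangle\ge0$ up to a controllable error'' is precisely the step the paper's single-particle weight is designed to avoid. For the nonlocal operator $\mathcal H_N^{T,E}$, a spatial cutoff does not commute with the $\Lambda_n$, so $\chi_R\phi$ is not in $\mathfrak H_N$ and the quadratic form is not even defined on it without reprojecting; the resulting cross terms are exactly the kind of nonlocal tail you would then have to control, and you have not said how. In the paper this issue never arises for the lower bound; it only appears, in a much milder form, inside the estimate of $Q_4$ (the term containing $V_1$), where it is handled by Lemma~\ref{projection in ball lemma} using the explicit exponential decay \eqref{decay of G} of the kernel of $\Lambda_1-\tfrac12$. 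That lemma, incidentally, is the true source of the constraint $a<1/2$ (one needs $G^{1-2a}$ integrable), hence of the $\tfrac1{2\sqrt N}$ in the statement after the $\sqrt N$ reduction.

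In short: your commutator/kernel program for the error terms is close in spirit to Lemmata~\ref{main lemma} and \ref{Q_4 lemma}, but your mechanism for the lower bound is the gap. Switching to a one-particle weight $f(\x_n)$ and using the cluster decomposition $Z_0$ as above removes the need for any geometric localization and makes the argument go through.
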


Note that for $\lambda$ close to the bottom of the essential spectrum $s$ behaves linearly in $\big(\varkappa(T, E)- \lambda\big)$. However, for Schr\"odinger \cite{Agmon1982}, Dirac \cite{HelfferParisse1994}, Chandrasekhar \cite{CarmonaMastersSimon1990}, and one--particle Brown--Ravenhall operators \cite{MatteStockmeyer2008preprint2} $s$ can be chosen to be proportional to the square root of this distance. This suggests a conjecture that for the multiparticle operators we are considering the actual rate of decay might have this property as well. But the proof of such a conjecture is yet obscure even in view of \cite{MatteStockmeyer2008preprint2}, since that result is obtained by comparison to the decay rate of the eigenfunctions of Dirac operator, which are nonexistent in the multiparticle case.

\section{Some properties of the model}\label{modified assumptions section}

In this section we single out some simple properties of the multiparticle Brown--Ravenhall operators introduced in the previous section. The reason for doing so is twofold. First, it will allow the reader to see which properties are required in each step of the subsequent proof of the exponential decay. Second, this will allow us to reformulate the main result without referring to the explicit form of the potentials \eqref{V_n} and \eqref{U_nj}, thus making future generalizations easier.

We need a bit of notation.
Let $\{\Omega_j\}_{j= 1}^N$ be a collection of uniformly $C^1$-regular domains in $\mathbb R^3$ with bounded boundaries.
For $n= 1, \dots, N$, $s\in\mathbb R$, and $\Omega= \underset{j= 1}{\overset N\times}\Omega_j$ we introduce the anisotropic Sobolev spaces
\begin{equation*}\label{H^s_n}
H^s_n(\Omega, \mathbb C^{4^N}):= \big(\underset{j= 1}{\overset{n- 1}\otimes} L_2(\Omega_j, \mathbb C^4)\big)\otimes H^s(\Omega_n, \mathbb C^4)\otimes\big(\underset{j= n+ 1}{\overset N\otimes} L_2(\Omega_j, \mathbb C^4)\big).
\end{equation*}

\begin{assumption}\label{square integrability assumption old}
For any $R> 0$ there exists a finite $C_R\geqslant 0$ such that
\begin{equation*}\label{V and U integrability again}
\sum_{n= 1}^N\bigg(\int_{|\x|\leqslant R}\big|V_n(\x)\big|^2\rmd\x\bigg)^{1/2}+ \sum_{n< j}^N\bigg(\int_{|\x|\leqslant R}\big| U_{nj}(\x)\big|^2\rmd\x\bigg)^{1/2}\leqslant C_R.
\end{equation*}
\end{assumption}
\noindent In other words, the interaction potentials are locally square integrable.

\begin{assumption}\label{decay assumption again}
The external field potentials decay at infinity in the $L_\infty$--norm:
\begin{equation}\label{V_n decay again}
\lim_{R\to\infty}\underset{|\x|> R}\esssup\big|V_n(\x)\big|= 0, \quad n= 1, \dots, N.
\end{equation}
\end{assumption}

\begin{assumption}\label{U_nj old decay assumption}
For any $\varepsilon> 0$ there exists $R> 0$ big enough such that for all $n< j= 1, \dots, N$
\begin{equation*}\label{U_nj old decay}
\|U_{nj}\psi\|_{L_2(\mathbb R^{3N}\cap\{|\x_n- \x_j|> R\})}\leqslant \varepsilon\min_{k= n, j}\|\psi\|_{H^{1/2}_k(\mathbb R^{3N}, \mathbb{C}^{4^N})}, \quad \textrm{for all}\quad \psi\in H^{1/2}(\mathbb{R}^{3N}, \mathbb{C}^{4^N}).
\end{equation*}
\end{assumption}
\begin{proof}
This follows from the weaker property
\begin{equation*}\label{U_nj decay again}
\lim_{R\to\infty}\underset{|\x_n- \x_j|> R}\esssup\big|U_{nj}(\x)\big|= 0, \quad n, j= 1, \dots, N
\end{equation*}
of the potentials \eqref{U_nj}.
\end{proof}

\begin{assumption}\label{U positivity assumption}
The interparticle interaction potentials are nonnegative:
\begin{equation}\label{U positivity}
U_{nj}\geqslant 0, \quad \textrm{for all}\quad n< j= 1, \dots, N.
\end{equation}
\end{assumption}
\noindent This follows from the assumption that all the particles of the system have electric charges of the same sign.

\begin{assumption}\label{upper assumption old}
There exists $C>  0$ such that for any $n= 1, \dots, N$
\begin{equation}\label{upper estimate V}
\big|\langle V_n\varphi, \psi\rangle\big|\leqslant C\|\varphi\|_{H_n^{1/2}}\|\psi\|_{H_n^{1/2}},\quad \textrm{for any}\quad \varphi, \psi\in H_n^{1/2}(\mathbb{R}^{3N}, \mathbb{C}^{4^N}),
\end{equation}
and for any $n< j= 1, \dots, N$
\begin{equation}\label{upper estimate U}
\big|\langle U_{nj}\varphi, \psi\rangle\big|\leqslant C\|\varphi\|_{H^{1/2}}\|\psi\|_{H^{1/2}},\quad \textrm{for any}\quad \varphi, \psi\in H^{1/2}(\mathbb{R}^{3N}, \mathbb{C}^{4^N}).
\end{equation}
\end{assumption}
\begin{proof}{}
Inequalities \eqref{upper estimate V} and \eqref{upper estimate U} follow from Kato's inequality (see \cite{Kato1966} and \cite{Herbst1977}, Theorem 2.9a.)
\end{proof}

\begin{assumption}\label{U Hardy assumption}
There exists $C> 0$ such that for any $n= 1, \dots, N$ and any $\psi\in H^1(\mathbb{R}^{3N}, \mathbb{C}^{4^N})$
\begin{equation}\label{U Hardy}
\|U_{nj}\psi\|\leqslant C\min_{k= n, j}\|\psi\|_{H^1_k(\mathbb{R}^{3N}, \mathbb{C}^{4^N})}.
\end{equation}
\end{assumption}
\noindent It is not surprising to have the minimum on the \rhs of \eqref{U Hardy}, since $U_{nj}$ only depends on the difference $\x_n- \x_j$. Note that \eqref{U Hardy} can be applied even if $\psi$ is only known to belong either to $H^1_n(\mathbb{R}^{3N}, \mathbb{C}^{4^N})$ or to $H^1_j(\mathbb{R}^{3N}, \mathbb{C}^{4^N})$.
\begin{proof}{}
Inequality \eqref{U Hardy} follows from Hardy's inequality (see e.g. \cite{BirmanSolomjak1987}, page 55) and the properties of symmetric--decreasing rearrangements (see e.g. \cite{LiebLoss2001}, Lemma~7.17 and relation (3.3.4)).
\end{proof}

\begin{assumption}\label{lower assumption old}
There exist $C_1> 0$ and $C_2\in \mathbb{R}$ such that for any cluster decomposition $Z$
\begin{equation*}\label{assumption 1 again}\begin{split}
\langle\mathcal{H}_{Z, j}\psi, \psi\rangle\geqslant C_1\langle\sum_{n\in Z_j}D_n\psi, \psi\rangle- C_2\|\psi\|^2&, \\ \textrm{for any}\quad \psi\in\underset{n\in Z_j}{\otimes}\Lambda_nH^{1/2}(\mathbb{R}^3, \mathbb{C}^4)&, \quad j= 1, 2.
\end{split}\end{equation*}
\end{assumption}
\begin{proof}{}
This is where we need the subcriticality condition \eqref{subcriticality}. According to the result of \cite{BalinskyEvans1999}, inequality \eqref{subcriticality} implies \eqref{assumption 1 again} if $N= 1$. For $N> 1$ it is enough to use Proprety~\ref{U positivity assumption} to estimate $\mathcal H_{Z, j}$ from below by a direct sum of one--particle operators.
\end{proof}

\begin{remark}
By Properties~\ref{upper assumption old} and \ref{lower assumption old}, the quadratic forms of operators \eqref{reduced spaces} (and, in particular, $\mathcal H_N$) are bounded below and closed on $\underset{n\in Z_j}{\otimes}\Lambda_nH^{1/2}(\mathbb{R}^3, \mathbb{C}^4)$. Thus these operators are well--defined in the form sense, and so are their restrictions to invariant subspaces.
\end{remark}

\begin{remark}\label{generalized decay theorem}
Suppose that the potentials \eqref{V_n} and \eqref{U_nj} are replaced by operators of multiplication by some measurable hermitian matrix--valued functions such that $V_n$ are the operators of multiplication of spinor coordinates of $n^{th}$ particle by $4\times 4$ matrix--valued functions $V_n(\x_n)$, $n= 1, \dots, N$, and $U_{nj}$ are the operators of multiplication of spinor coordinates of $n^{th}$ and $j^{th}$ particles by $16\times 16$ matrix--valued functions $U_{nj}(\x_n- \x_j)$, $n< j= 1, \dots, N$.
Then the statements of Theorems~\ref{essential theorem} and \ref{decay theorem} remain valid provided Properties~\ref{square integrability assumption old} --- \ref{lower assumption old} hold. Indeed, Properties~\ref{square integrability assumption old}--- \ref{lower assumption old} imply Assumptions~1---5 of \cite{Morozov2008}, which form the hypothesis of Theorem 6 of \cite{Morozov2008}. And in the proof of Theorem~\ref{decay theorem} we will not need the explicit expressions \eqref{V_n} and \eqref{U_nj}, but only the properties listed in this section.
\end{remark}

\section{Proof of Theorem~\ref{decay theorem}}\label{proof of decay theorem section}

Some constants in the proof can depend on the masses of the particles. Since we only deal with a finite number of particles with positive masses, such dependence will not be indicated explicitly.

\begin{lemma}\label{s to a lemma}
Suppose that for some $a> 0$
\begin{equation}\label{simpler statement}
\int_{\mathbb R^{3N}} \rme^{2a|\x_n|}\big|\phi(\mathbf{X})\big|^2\rmd\mathbf{X}< \infty, \quad n= 1, \dots, N.
\end{equation}
Then \eqref{decay statement} holds with $s= N^{-1/2}a$.
\end{lemma}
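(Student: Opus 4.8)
The plan is to pass from the $N$ separate one-particle weighted bounds in \eqref{simpler statement} to the single weighted bound \eqref{decay statement} by a purely pointwise inequality in the exponent, combined with a product/sum estimate. The key observation is that for any point $\X = (\x_1, \dots, \x_N) \in \mathbb R^{3N}$ we have $|\X| = \bigl(\sum_{n=1}^N |\x_n|^2\bigr)^{1/2} \leqslant \sum_{n=1}^N |\x_n|$, but this crude bound is too lossy; instead I would use that $|\X| \leqslant \sqrt N \max_{n} |\x_n|$, so that $2s|\X| = 2 N^{-1/2} a |\X| \leqslant 2a \max_n |\x_n| \leqslant 2a \bigl(|\x_1| + \dots + |\x_N|\bigr)$ is not quite what is needed either. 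The cleanest route: from $|\X|^2 = \sum_n |\x_n|^2$ and the elementary inequality $\bigl(\sum_n t_n\bigr)^{1/2} \leqslant \sum_n t_n^{1/2}$ applied to $t_n = |\x_n|^2$ we get nothing new, so I would instead argue $\rme^{2s|\X|} \leqslant \rme^{2s\sqrt N \max_n |\x_n|} = \rme^{2a\max_n|\x_n|} \leqslant \sum_{n=1}^N \rme^{2a|\x_n|}$, using $s\sqrt N = a$ and the fact that $\rme^{2a\max_n|\x_n|}$ equals $\rme^{2a|\x_m|}$ for the index $m$ achieving the maximum, hence is bounded by the sum over all $n$.

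With this pointwise majorization in hand, the rest is immediate: integrate against $|\phi(\X)|^2$ to obtain
\begin{equation*}
\int_{\mathbb R^{3N}} \rme^{2s|\X|}\bigl|\phi(\X)\bigr|^2\,\rmd\X
\leqslant \sum_{n=1}^N \int_{\mathbb R^{3N}} \rme^{2a|\x_n|}\bigl|\phi(\X)\bigr|^2\,\rmd\X,
\end{equation*}
and each summand on the right is finite by hypothesis \eqref{simpler statement}, so the left-hand side is finite. This establishes \eqref{decay statement} with $s = N^{-1/2} a$.

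I do not expect any genuine obstacle here; the lemma is a bookkeeping step whose only content is the geometric inequality $|\X| \leqslant \sqrt N \max_n |\x_n|$ together with $\rme^{c\max_n b_n} \leqslant \sum_n \rme^{c b_n}$ for $c, b_n \geqslant 0$. The one point worth stating carefully is the choice $s = N^{-1/2} a$, which is exactly calibrated so that $s\sqrt N = a$ and the exponential weight in the target integral is controlled by the weights available for each coordinate; any larger $s$ would break the pointwise bound. This also explains the appearance of the factor $(2\sqrt N)^{-1}$ in the definition of $s$ in Theorem~\ref{decay theorem}: the one-particle bounds will be established with $a$ up to roughly $(2\sqrt N)^{-1}\cdot\sqrt N$ times a constant, and this lemma converts them.
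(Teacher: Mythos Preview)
Your argument is correct and essentially identical to the paper's own proof: the paper also uses the pointwise chain $\rme^{2s|\X|}\leqslant \rme^{2\sqrt{N}s\max_n|\x_n|}\leqslant \sum_{n=1}^N \rme^{2\sqrt{N}s|\x_n|}= \sum_{n=1}^N \rme^{2a|\x_n|}$ and then integrates. The only difference is expository---you could trim the initial exploration of the ``too lossy'' bounds and the slightly muddled final remark about the factor $(2\sqrt N)^{-1}$.
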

\begin{proof}{} 
\begin{equation*}\label{exponent game}
\rme^{2s|\mathbf{X}|}\leqslant \rme^{2\sqrt{N}s\underset{n= 1, \dots, N}{\max}|\x_n|}\leqslant \sum_{n= 1}^N\rme^{2\sqrt{N}s|\x_n|}= \sum_{n= 1}^N\rme^{2a|\x_n|}.
\end{equation*}
Thus \eqref{simpler statement} implies \eqref{decay statement} after summation in $n$.
\end{proof}

It remains to prove that \eqref{simpler statement} holds with some suitable $a> 0$.
Without loss of generality we will consider the case $n= 1$.

Let $\rho\in C^2\big([0, \infty), [0, \infty)\big)$ be given by
\begin{equation}\label{rho}
\rho(z):=\begin{cases}z^2- \dfrac{z^3}{3}, & z\in[0,1),\\ \\z- \dfrac{1}{3}, & z\in[1, \infty).\end{cases}
\end{equation}
For $\epsilon> 0$ let
\begin{equation}\label{f}
f(\mathbf X):= f(\x_1):= \frac{\rho\big(|\x_1|\big)}{1+ \epsilon\rho\big(|\x_1|\big)}.
\end{equation}
Note that for any $\epsilon> 0$
\begin{equation}\label{gradient boundedness of f}
\|\nabla f\|_{L_\infty}< 1.
\end{equation}

Since $\phi\in L_2(\mathbb{R}^{3N}, \mathbb{C}^{4^N})$, for $n= 1$ \eqref{simpler statement} is equivalent to
\begin{equation}\label{modified statement}
\|\rme^{af}\phi\|_{L_2(\mathbb{R}^{3N}, \mathbb{C}^{4^N})}\leqslant C
\end{equation}
with $C$ independent of $\epsilon$. Note that for any $\epsilon> 0$ the function $\rme^{af}$ is twice differentiable with bounded derivatives. Hence multiplication by $\rme^{af}$ is a bounded operator in the Sobolev spaces $H^s(\mathbb R^3, \mathbb C^4)$ with $s\in [0, 2]$.

The following two lemmata will be important in the subsequent proof.

\begin{lemma}\label{main lemma}
For any $a_0\in [0, 1)$ there exists $C(a_0)> 0$ such that for any $a\in [0, a_0]$ and $\psi\in L_2(\mathbb{R}^3, \mathbb{C}^4)$
\begin{equation}\label{main estimate}
\big\|[\Lambda_1, \rme^{af}]\psi\big\|_{H^1(\mathbb{R}^3, \mathbb{C}^4)}\leqslant C(a_0)a\|\rme^{af}\psi\|,
\end{equation}
and
\begin{equation}\label{variation of main estimate}
\big\|\rme^{-af}[\Lambda_1, \rme^{af}]\psi\big\|_{H^1(\mathbb{R}^3, \mathbb{C}^4)}\leqslant C(a_0)a\|\psi\|.
\end{equation}
\end{lemma}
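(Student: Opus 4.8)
The plan is to reduce the commutator estimate to a statement about the integral kernel of $\Lambda_1$. Since $f$ depends only on $\mathbf x_1$, the operator $\Lambda_1$ acts nontrivially only in the $\mathbf x_1$ variable (and the spinor index), so it suffices to work in $L_2(\mathbb R^3,\mathbb C^4)$, as the statement already does. I would write $\Lambda_1$ as a convolution operator: $\Lambda_1 = \frac12 + R$, where $R$ has a known kernel built from the Fourier multiplier $\frac{\boldsymbol\alpha\cdot\bk + \beta m_1}{2\sqrt{\bk^2+ m_1^2}}$; this kernel decays exponentially (like $\rme^{-m_1|\x-\y|}$ up to polynomial and Bessel-function factors — this is exactly what the Appendix on modified Bessel functions is for) and has a non-integrable singularity of order $|\x-\y|^{-3}$ on the diagonal coming from the $\boldsymbol\alpha\cdot\bk$ part. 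The constant $\frac12$ commutes with everything, so $[\Lambda_1,\rme^{af}] = [R,\rme^{af}]$ has kernel $R(\x-\y)\big(\rme^{af(\y)} - \rme^{af(\x)}\big)$.

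The key point is that the factor $\rme^{af(\y)} - \rme^{af(\x)}$ vanishes on the diagonal and, by \eqref{gradient boundedness of f}, satisfies $\big|\rme^{af(\y)}-\rme^{af(\x)}\big| \leqslant a\,|\x-\y|\,\rme^{a f(\x)}\rme^{a|\x-\y|}$ (using $|f(\y)-f(\x)| \le \|\nabla f\|_\infty|\x-\y| < |\x-\y|$ and convexity of the exponential, with a harmless factor $\rme^{a\rxy}$ absorbing the difference between $\rme^{af(\x)}$ and $\rme^{af(\y)}$). The extra factor $|\x-\y|$ improves the diagonal singularity of $R$ from $|\x-\y|^{-3}$ to $|\x-\y|^{-2}$, which is the borderline case for which the operator with kernel $\rxy^{-2}g(\x,\y)$ maps $L_2$ to $H^1$; the $H^1$ (rather than $L_2$) target is natural because $R$ itself is order $-1$ smoothing, and removing one power of the singularity gains one derivative. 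Concretely, I would split the kernel of $\rme^{-af}[\Lambda_1,\rme^{af}]$, after conjugation, into a short-range part ($|\x-\y|\le 1$) handled by a Schur-type test or by the weak Young / Hardy–Littlewood–Sobolev-type bounds recalled in Section~\ref{integral operators section}, and a long-range part ($|\x-\y| > 1$) where the exponential decay $\rme^{-m_1\rxy}$ of $R$ beats the growth $\rme^{a\rxy}$ as long as $a < m_1$; for the statement as given we only need $a \le a_0 < 1$, and one should check $1 \le m_1$ is not actually required — if $m_1 < 1$ one simply takes $a_0 < m_1$, but in any case a uniform constant $C(a_0)$ is obtained for $a$ in the allowed range. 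For the derivative bound one differentiates the kernel: $\nabla_\x$ hitting $R(\x-\y)$ raises the singularity back to $|\x-\y|^{-3}$, but this is again compensated by the vanishing factor, while $\nabla_\x$ hitting $\rme^{af(\x)}$ or $\rme^{a|\x-\y|}$ produces only bounded extra factors of size $O(a)$ or $O(1)$; so the same integral-operator bounds apply and yield \eqref{variation of main estimate} with the right-hand side $C(a_0)a\|\psi\|$. Inequality \eqref{main estimate} then follows from \eqref{variation of main estimate} since multiplication by $\rme^{af}$ is bounded on $H^1$ with norm bounded uniformly in $a\in[0,a_0]$ and $\epsilon > 0$ (as noted after \eqref{modified statement}), and $\|\rme^{af}\rme^{-af}[\Lambda_1,\rme^{af}]\psi\|_{H^1} \le C\|\rme^{-af}[\Lambda_1,\rme^{af}]\psi\|_{H^1} \le C(a_0)a\|\psi\| = C(a_0)a\|\rme^{-af}\rme^{af}\psi\| \le C(a_0)a\|\rme^{af}\psi\|$ after again using boundedness of multiplication by $\rme^{-af}$ — care is needed here with which direction the bound goes, but the point is that $\rme^{\pm af}$ are uniformly bounded multipliers.

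The main obstacle I anticipate is the borderline nature of the $|\x-\y|^{-2}$ singularity in three dimensions when mapping into $H^1$: a naive Schur test on $\rxy^{-2}$ diverges logarithmically at infinity, so one genuinely needs either to exploit the exponential decay of $R$ to cut off the large-$|\x-\y|$ region, or to invoke a sharper weighted-boundedness criterion for integral operators (presumably one of the two theorems promised in Section~\ref{integral operators section}, e.g. a version adapted to kernels that are convolutions times a bounded multiplier). A secondary technical annoyance is controlling the derivatives of $\rme^{af}$ and of $f$ uniformly in $\epsilon$: this is exactly why $\rho$ was chosen in \eqref{rho} to be $C^2$ with bounded derivatives and why $f$ in \eqref{f} is bounded (by $1/\epsilon$) with $\|\nabla f\|_\infty < 1$ independently of $\epsilon$, so the constants in the kernel estimates do not blow up as $\epsilon \to 0$. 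Making this last point rigorous — that every constant appearing is independent of $\epsilon$ — is the bookkeeping that has to be done carefully, though it is routine given the explicit form of $f$.
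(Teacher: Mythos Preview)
Your overall setup is right — reduce to the integral kernel of $\Lambda_1-\tfrac12$, use that the commutator kernel is $\mathcal K(\x-\y)\big(\rme^{a(f(\y)-f(\x))}-1\big)$, and exploit the vanishing on the diagonal together with the exponential decay of $\mathcal K$ at infinity. The gap is in the $H^1$ bound at short range. Your first--order estimate $\big|\rme^{a(f(\y)-f(\x))}-1\big|\leqslant a\,\rxy\,\rme^{a\rxy}$ gives the commutator kernel a local singularity $\sim\rxy^{-2}$, so the $L_2\to L_2$ Schur test indeed closes. But for $L_2\to H^1$ you must control $(\nabla_\x\mathcal K)(\x-\y)\big(\rme^{a(f(\y)-f(\x))}-1\big)$, whose local singularity is $\rxy^{-4}\cdot\rxy=\rxy^{-3}$. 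This is \emph{not} Schur--testable in $\mathbb R^3$ (the divergence is at the origin, not at infinity, so the exponential decay of $\mathcal K$ does not help), and it is \emph{not} a convolution kernel either, because the factor $\rme^{a(f(\y)-f(\x))}-1$ depends on $\x$ and $\y$ separately; hence Stein's Calder\'on--Zygmund theorem does not apply directly. Your sentence ``this is again compensated by the vanishing factor'' uses that factor twice. The paper's resolution is a second--order Taylor expansion
\[
1-\rme^{a(f(\x)-f(\y))}=-a(\nabla f)(\y)\cdot(\x-\y)+R_1(\x,\y)+R_2(\x,\y),
\]
so that the leading piece has the product structure (bounded function of $\y$)$\times$(convolution in $\x-\y$) and its gradient is a genuine Calder\'on--Zygmund convolution handled by Theorem~\ref{Stein theorem}, while $R_1,R_2=O(\rxy^2)$ give kernels with local singularity $\rxy^{-2}$ even after differentiation, hence Schur--testable. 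This use of the $C^2$ regularity of $f$ (the Hessian bound \eqref{rho}--\eqref{f}) is precisely what is missing from your argument.

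A second, smaller issue: your derivation of \eqref{main estimate} from \eqref{variation of main estimate} by sandwiching with $\rme^{\pm af}$ requires multiplication by $\rme^{af}$ to be bounded on $H^1$ \emph{uniformly in $\epsilon$}, but $\|\rme^{af}\|_{L_\infty}=\rme^{a/\epsilon}\to\infty$ as $\epsilon\to0$, so this step fails. The paper instead proves both inequalities directly and in parallel: one studies $[\Lambda_1,\rme^{af}]\rme^{-af}$ for \eqref{main estimate} and $\rme^{-af}[\Lambda_1,\rme^{af}]$ for \eqref{variation of main estimate}; the two kernels differ only by the sign of $f(\x)-f(\y)$ in the exponent, and the same Taylor argument works for each.
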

Lemma~\ref{main lemma} is proved in Section~\ref{proof of the main lemma}. Some analogous estimates with $L_2$--norms instead of $H^1$--norms can be found in \cite{BachMatte2001}.

\begin{corollary}\label{auxiliary corollary}
For any $a_0\in[0, 1)$ there exists $C(a_0)> 0$ such that for any $a\in [0, a_0]$ and $\psi\in L_2(\mathbb{R}^3, \mathbb{C}^4)$
\begin{equation}\label{desired boundedness}
\|\rme^{-af}\Lambda_1\rme^{af}\psi\|\leqslant C(a_0)\|\psi\|.
\end{equation}
\end{corollary}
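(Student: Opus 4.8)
The plan is to deduce the uniform bound on $\rme^{-af}\Lambda_1\rme^{af}$ directly from Lemma~\ref{main lemma} by writing this operator as a sum of $\Lambda_1$ itself, which is an orthogonal projector and hence a contraction, and the conjugated commutator $\rme^{-af}[\Lambda_1, \rme^{af}]$. More precisely, the algebraic identity
\begin{equation*}
\rme^{-af}\Lambda_1\rme^{af}= \Lambda_1+ \rme^{-af}\big(\Lambda_1\rme^{af}- \rme^{af}\Lambda_1\big)= \Lambda_1+ \rme^{-af}[\Lambda_1, \rme^{af}]
\end{equation*}
holds on any vector $\psi$ for which the right-hand side makes sense; here all three factors $\rme^{\pm af}$ are bounded operators on $L_2(\mathbb R^3, \mathbb C^4)$ because $f$ is bounded (indeed $0\leqslant f\leqslant \epsilon^{-1}$), so there is no domain subtlety.

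From this identity I would estimate in the $L_2$-norm,
\begin{equation*}
\|\rme^{-af}\Lambda_1\rme^{af}\psi\|\leqslant \|\Lambda_1\psi\|+ \big\|\rme^{-af}[\Lambda_1, \rme^{af}]\psi\big\|\leqslant \|\psi\|+ \big\|\rme^{-af}[\Lambda_1, \rme^{af}]\psi\big\|_{H^1(\mathbb R^3, \mathbb C^4)},
\end{equation*}
using that $\Lambda_1$ is a projector for the first term and that the $L_2$-norm is dominated by the $H^1$-norm for the second. Then I would invoke the second estimate \eqref{variation of main estimate} of Lemma~\ref{main lemma}, which gives $\big\|\rme^{-af}[\Lambda_1, \rme^{af}]\psi\big\|_{H^1}\leqslant C(a_0)a\|\psi\|$ for all $a\in[0, a_0]$. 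Since $a\leqslant a_0< 1$, this yields
\begin{equation*}
\|\rme^{-af}\Lambda_1\rme^{af}\psi\|\leqslant \big(1+ C(a_0)a_0\big)\|\psi\|,
\end{equation*}
so \eqref{desired boundedness} holds with the constant $1+ C(a_0)a_0$ (renaming it $C(a_0)$), uniformly in $\epsilon> 0$ since the constant in Lemma~\ref{main lemma} is $\epsilon$-independent.

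There is essentially no obstacle here: the corollary is a one-line consequence of Lemma~\ref{main lemma}, the only points worth stating carefully being that the splitting into $\Lambda_1$ plus a commutator is legitimate (boundedness of $\rme^{\pm af}$), that $\|\Lambda_1\|= 1$, and that one uses the variant \eqref{variation of main estimate} rather than \eqref{main estimate} precisely because we want the bound in terms of $\|\psi\|$ and not $\|\rme^{af}\psi\|$. The uniformity in $\epsilon$ is inherited verbatim from Lemma~\ref{main lemma}, which is the feature that will later allow passage to the limit $\epsilon\to 0$ in \eqref{modified statement}.
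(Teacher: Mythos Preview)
Your proof is correct and follows exactly the same approach as the paper: the identity $\rme^{-af}\Lambda_1\rme^{af}= \Lambda_1+ \rme^{-af}[\Lambda_1, \rme^{af}]$ together with \eqref{variation of main estimate} from Lemma~\ref{main lemma}. The paper's version is simply terser, omitting the explicit tracking of the constant and the remarks on domain issues and $\epsilon$-independence.
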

\begin{proof}{}
\begin{equation*}
\rme^{-af}\Lambda_1\rme^{af}= \Lambda_1+ \rme^{-af}[\Lambda_1, \rme^{af}],
\end{equation*}
and \eqref{variation of main estimate} implies \eqref{desired boundedness}.
\end{proof}

\begin{lemma}\label{projection in ball lemma}
Let $B_R$ be the ball of radius $R> 0$ in $\mathbb R^3$ centred at the origin. For any $a\in [0, 1/2)$ there exist $C(R)> 0$ and $C(a, R)> 0$ such that for any $\psi\in H^{1/2}(\mathbb R^3, \mathbb C^4)$
\begin{equation}\label{projection in ball estimate}
\|\Lambda_1\psi\|_{H^{1/2}(B_R, \mathbb C^4)}\leqslant C(R)\|\psi\|_{H^{1/2}(B_{3R}, \mathbb C^4)}+ C(a, R)\|\rme^{-2af}\psi\|_{L_2(\mathbb R^3, \mathbb C^4)}.
\end{equation}
\end{lemma}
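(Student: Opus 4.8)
The plan is to split $\psi$ into a part localised in a fixed ball and a tail supported far from the origin, and to estimate $\Lambda_1$ applied to each: the local part via the $H^{1/2}$-boundedness of $\Lambda_1$, and the tail via the commutator bound of Lemma~\ref{main lemma} together with the exponential decay of the kernel of $\Lambda_1$ off the diagonal. Fix cut-off functions $\chi_1,\chi_2\in C^\infty_0(\mathbb R^3)$ with $0\leqslant\chi_j\leqslant 1$, $\chi_1\equiv 1$ on $B_R$ and $\supp\chi_1\subset B_{3R/2}$, $\chi_2\equiv 1$ on $B_{2R}$ and $\supp\chi_2\subset B_{3R}$, so that $\chi_2\equiv 1$ on $\supp\chi_1$ while $\supp\chi_1$ and $\supp(1-\chi_2)$ are separated by distance $R/2$. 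Since $\chi_1\equiv 1$ on $B_R$,
\[
\|\Lambda_1\psi\|_{H^{1/2}(B_R,\mathbb C^4)}\leqslant\|\chi_1\Lambda_1\psi\|_{H^{1/2}(\mathbb R^3,\mathbb C^4)}\leqslant\|\chi_1\Lambda_1(\chi_2\psi)\|_{H^{1/2}(\mathbb R^3,\mathbb C^4)}+\big\|\chi_1\Lambda_1\big((1-\chi_2)\psi\big)\big\|_{H^{1/2}(\mathbb R^3,\mathbb C^4)},
\]
and it remains to estimate the two terms on the right.

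For the first term I would use that $\Lambda_1$ is a Fourier multiplier with bounded symbol, hence bounded on $H^{1/2}(\mathbb R^3,\mathbb C^4)$, that multiplication by the smooth compactly supported function $\chi_1$ is bounded on $H^{1/2}$, and that $\|\chi_2\psi\|_{H^{1/2}(\mathbb R^3,\mathbb C^4)}\leqslant C(R)\|\psi\|_{H^{1/2}(B_{3R},\mathbb C^4)}$, the last inequality holding because $\supp\chi_2\subset B_{3R}$ (apply an extension operator for the ball $B_{3R}$). Chaining these three facts bounds the first term by $C(R)\|\psi\|_{H^{1/2}(B_{3R},\mathbb C^4)}$.

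For the tail term I would write $\psi=\rme^{2af}h$ with $h:=\rme^{-2af}\psi\in L_2(\mathbb R^3,\mathbb C^4)$ and put $g:=(1-\chi_2)h$, so that $(1-\chi_2)\psi=\rme^{2af}g$ and $\|g\|_{L_2}\leqslant\|h\|_{L_2}=\|\rme^{-2af}\psi\|_{L_2}$. Conjugating $\Lambda_1$ and using $\rme^{-2af}\Lambda_1\rme^{2af}=\Lambda_1+\rme^{-2af}[\Lambda_1,\rme^{2af}]$,
\[
\chi_1\Lambda_1\big((1-\chi_2)\psi\big)=\chi_1\rme^{2af}\Lambda_1 g+\chi_1\rme^{2af}\big(\rme^{-2af}[\Lambda_1,\rme^{2af}]g\big).
\]
On $\supp\chi_1\subset B_{3R/2}$ the factor $\chi_1\rme^{2af}$ is smooth, compactly supported and bounded by a constant depending only on $a$ and $R$ (uniformly in $\epsilon$, since $f\leqslant\rho(|\x_1|)$), hence a bounded multiplier on $H^{1/2}$ and on $H^1$ with such a norm. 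Because $a<1/2$ we may apply Lemma~\ref{main lemma} with $a$ replaced by $2a$, which gives $\|\rme^{-2af}[\Lambda_1,\rme^{2af}]g\|_{H^1(\mathbb R^3,\mathbb C^4)}\leqslant C(a)\|g\|_{L_2}$, so the second summand is bounded in $H^1$, hence in $H^{1/2}$, by $C(a,R)\|\rme^{-2af}\psi\|_{L_2(\mathbb R^3,\mathbb C^4)}$. For the first summand I would use that $g$ is supported in $\{|\x_1|\geqslant 2R\}$, disjoint from $\supp\chi_1$: the operator $\chi_1\rme^{2af}\Lambda_1(1-\chi_2)$ then has integral kernel $\chi_1(\x)\rme^{2af(\x)}k_1(\x-\y)(1-\chi_2(\y))$, where $k_1$ is the kernel of $\Lambda_1$, which away from the diagonal is smooth and decays exponentially together with all its derivatives (by the explicit form of $\Lambda_1$; cf. the modified Bessel functions in the Appendix). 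On the support of this kernel $|\x-\y|\geqslant R/2$ and $\x$ ranges over a fixed ball, so the Cauchy--Schwarz inequality bounds this operator and its $\x$-derivatives pointwise by $C(a,R)\|h\|_{L_2}$; hence $\|\chi_1\rme^{2af}\Lambda_1 g\|_{H^1(\mathbb R^3,\mathbb C^4)}\leqslant C(a,R)\|\rme^{-2af}\psi\|_{L_2(\mathbb R^3,\mathbb C^4)}$. Combining the two summands and using $H^1\subset H^{1/2}$ yields \eqref{projection in ball estimate}.

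The main point where care is needed is the smoothness and exponential decay (with all derivatives) of the kernel of $\Lambda_1$ off the diagonal — this is where the explicit structure of the projector and the Bessel-function estimates of the Appendix enter. Once that is available, the rest is a routine combination of the localisation scheme above, the mapping properties of $\Lambda_1$, and Lemma~\ref{main lemma}; note also that the restriction $a<1/2$ is used precisely so that Lemma~\ref{main lemma} may be invoked with the doubled exponent $2a$.
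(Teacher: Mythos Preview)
Your proof is correct and shares the paper's overall localisation scheme: split $\psi$ into a piece supported in $B_{3R}$ and a tail, handle the local piece by the $H^{1/2}$--boundedness of $\Lambda_1$ together with the multiplier Lemma~\ref{multiplicator lemma}, and treat the tail using the exponential off--diagonal decay of the kernel $\mathcal K$ of $\Lambda_1-\tfrac12$. The genuine difference is in the tail estimate. The paper works directly with the kernel: for $\x\in B_R$ and $\supp(\eta\psi)\subset\{|\y|>2R\}$ one bounds $\int G(|\y|-R)|\psi(\y)|\,\rmd\y$ by the weighted Cauchy--Schwarz splitting $G=G^{(1-2a)/2}G^{(1+2a)/2}$, so that the factor $\int G^{1-2a}$ is finite (this is precisely where $a<1/2$ enters) while $G^{1+2a}(|\y|-R)\leqslant C(a,R)\rme^{-4af(\y)}$ yields the weighted $L_2$ norm. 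This is self--contained and does not invoke Lemma~\ref{main lemma}. You instead conjugate, writing $(1-\chi_2)\psi=\rme^{2af}g$ and using $\rme^{-2af}\Lambda_1\rme^{2af}=\Lambda_1+\rme^{-2af}[\Lambda_1,\rme^{2af}]$; the commutator piece is controlled by Lemma~\ref{main lemma} with exponent $2a$ (valid since $2a<1$), and the remaining piece $\chi_1\rme^{2af}\Lambda_1 g$ by a disjoint--support kernel estimate. Both routes are sound; the paper's is a touch more economical (one Cauchy--Schwarz step, no appeal to Lemma~\ref{main lemma}), while yours makes the origin of the restriction $a<1/2$ very transparent as exactly the condition needed to double the exponent in Lemma~\ref{main lemma}.
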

We prove Lemma~\ref{projection in ball lemma} in Section~\ref{proof of projection in ball section}.

In order to be able to apply Lemma~\ref{projection in ball lemma} we will only consider $a\in [0, 1/2)$. We can thus fix $a_0\in [1/2, 1)$ and no longer trace the dependence of the constants in Lemma~\ref{main lemma} and Corollary~\ref{auxiliary corollary} on this parameter.

Let us fix a cluster decomposition
\begin{equation}\label{Z_0}
Z_0:= \big(\{2, \dots, N\}, \{1\}\big).
\end{equation}
Then
\begin{equation}\label{projector summation}
\Lambda_1\rme^{af}\phi= P^TP^E\Lambda_1\rme^{af}\phi= \sum_{(T_1, E_1; T_2, 1)\underset{Z_0}\prec (T, E)}(P^{T_1}P^{E_1}\otimes P^{T_2})\Lambda_1\rme^{af}\phi.
\end{equation}
The eigenfunction $\phi$ belongs to the form domain of $\mathcal{H}_N^{T, E}$, which is \begin{equation*}P^TP^E\underset{n= 1}{\overset N\otimes}\Lambda_nH^{1/2}(\mathbb R^3, \mathbb C^4)\subset H^{1/2}(\mathbb R^{3N}, \mathbb C^{4^N}).\end{equation*}
Hence by \eqref{projector summation}, \eqref{E}, \eqref{bottom}, and \eqref{E(T)}
\begin{equation}\label{varkappa from below}\begin{split}
&\langle\Lambda_1\rme^{af}\phi, (\mathcal H_{Z_0, 1}+ \mathcal H_{Z_0, 2})\Lambda_1\rme^{af}\phi\rangle\\ &\geqslant \langle\Lambda_1\rme^{af}\phi, \sum_{(T_1, E_1; T_2, 1)\underset{Z_0}\prec (T, E)}\big(\varkappa_1(Z_0, T_1, E_1)+ \varkappa_2(Z_0, T_2, 1)\big)(P^{T_1}P^{E_1}\otimes P^{T_2})\Lambda_1\rme^{af}\phi\rangle\\ &\geqslant \varkappa(T, E)\|\Lambda_1\rme^{af}\phi\|^2.
\end{split}\end{equation}

Let us introduce
\begin{eqnarray}\label{Q_1}
Q_1&:=& \varkappa(T, E)\langle \rme^{af}\phi, [\rme^{af}, \Lambda_1]\phi\rangle,\\
\label{Q_2}
Q_2&:=& \langle\Lambda_1\rme^{af}\phi, \Big(\sum_{n= 2}^N(D_n+ V_n)+ \sum_{1< n< j}^NU_{nj}+ D_1\Big)[\Lambda_1, \rme^{af}]\phi\rangle,\\
\label{Q_3}
Q_3&:=& \langle\Lambda_1\rme^{af}\phi, [D_1, \rme^{af}]\phi\rangle,\\
\label{Q_4}
Q_4&:=& -\langle\Lambda_1\rme^{af}\Lambda_1\rme^{af}\phi, \Big(V_1+ \sum_{j= 2}^NU_{1j}\Big)\phi\rangle.
\end{eqnarray}
Then by \eqref{varkappa from below} (recall the definitions \eqref{H_Z_1}, \eqref{H_Z_2}, \eqref{reduced spaces}, and \eqref{H_N'})
\begin{equation}\label{main chain}\begin{split}
&\varkappa(T, E)\|\rme^{af}\phi\|^2= \langle\Lambda_1\rme^{af}\phi, \varkappa(T, E)\Lambda_1\rme^{af}\phi\rangle+ Q_1\\ &\leqslant \langle\Lambda_1\rme^{af}\phi, (\mathcal H_{Z_0, 1}+ \mathcal H_{Z_0, 2})\Lambda_1\rme^{af}\phi\rangle+ Q_1\\ &=\langle\Lambda_1\rme^{af}\phi, \Big(\sum_{n= 2}^N(D_n+ V_n)+ \sum_{1< n< j}^NU_{nj}+ D_1\Big)\rme^{af}\phi\rangle+ Q_1+ Q_2\\&= \langle\Lambda_1\rme^{af}\phi, \rme^{af}\Big(\sum_{n= 2}^N(D_n+ V_n)+ \sum_{1< n< j}^NU_{nj}+ D_1\Big)\phi\rangle+ \sum_{l= 1}^3Q_l\\ &= \langle\Lambda_1\rme^{af}\phi, \rme^{af}\mathcal H_N^{T, E}\phi\rangle+ \sum_{l= 1}^4Q_l= \lambda\|\Lambda_1\rme^{af}\phi\|^2+ \sum_{l= 1}^4Q_l\\ &\leqslant \lambda\|\rme^{af}\phi\|^2+ \sum_{l= 1}^4Q_l.
\end{split}\end{equation}
Thus
\begin{equation}\label{the main estimate}
\big(\varkappa(T, E)- \lambda\big)\|\rme^{af}\phi\|^2\leqslant \sum_{l= 1}^4Q_l,
\end{equation}
and it remains to estimate $Q_1, \dots, Q_4$. This will be done in the next four lemmata.

\begin{lemma}\label{Q_1 lemma}
There exists a positive constant $C_1$ such that
\begin{equation}\label{Q_1 estimate}
|Q_1|\leqslant C_1a\|\rme^{af}\phi\|^2.
\end{equation}
\end{lemma}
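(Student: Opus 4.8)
The plan is to estimate $Q_1 = \varkappa(T,E)\langle \rme^{af}\phi, [\rme^{af},\Lambda_1]\phi\rangle$ directly by bounding the commutator $[\rme^{af},\Lambda_1]=-[\Lambda_1,\rme^{af}]$ using Lemma~\ref{main lemma}. First I would apply Cauchy--Schwarz to get $|Q_1|\leqslant |\varkappa(T,E)|\,\|\rme^{af}\phi\|\,\big\|[\Lambda_1,\rme^{af}]\phi\big\|$. The second factor is then controlled by \eqref{main estimate}: since we have already restricted to $a\in[0,1/2)$ and fixed $a_0\in[1/2,1)$, Lemma~\ref{main lemma} gives $\big\|[\Lambda_1,\rme^{af}]\phi\big\|\leqslant\big\|[\Lambda_1,\rme^{af}]\phi\big\|_{H^1(\mathbb R^3,\mathbb C^4)}\leqslant C(a_0)\,a\,\|\rme^{af}\phi\|$, where $\Lambda_1$ here acts on the first particle's coordinates and the $L_2$ and $H^1$ norms are understood fibrewise (integrating over $\x_2,\dots,\x_N$), which is legitimate because $\rme^{af}$ depends only on $\x_1$ and commutes with everything acting on the other variables. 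Multiplying the two bounds yields $|Q_1|\leqslant |\varkappa(T,E)|\,C(a_0)\,a\,\|\rme^{af}\phi\|^2$.

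The only remaining point is to absorb $|\varkappa(T,E)|$ into the constant so that $C_1$ genuinely does not depend on $\lambda$ or $\phi$. Since $\varkappa(T,E)$ is a fixed number determined by the operator $\mathcal H_N^{T,E}$ (and ultimately, by the preceding Proposition, equals $\min_n m_n$ up to the contribution of $\varkappa_1$, hence is bounded in terms of the masses and the potentials), it is a constant of the problem; following the convention stated at the start of Section~\ref{proof of decay theorem section} that mass-dependent constants are not tracked, we simply set $C_1:=|\varkappa(T,E)|\,C(a_0)$ and obtain \eqref{Q_1 estimate}.

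I expect no real obstacle here: this lemma is the easiest of the four $Q_l$ estimates, since $Q_1$ is linear in the commutator and Lemma~\ref{main lemma} is tailor-made for it. The one thing to be careful about is the fibrewise interpretation of Lemma~\ref{main lemma} (which is stated for $\psi\in L_2(\mathbb R^3,\mathbb C^4)$) when applied to $\phi\in L_2(\mathbb R^{3N},\mathbb C^{4^N})$; this is routine because the estimate \eqref{main estimate} holds with a constant uniform in $\psi$, so integrating its square over the remaining $3(N-1)$ variables preserves the inequality with the same constant. One should also note that $\rme^{af}\phi\in L_2$ is not yet known to be finite at this stage — the whole point of \eqref{the main estimate} is to bootstrap that — so strictly speaking all these inequalities are to be read as holding whenever the right-hand sides are finite (e.g. after the standard truncation/regularisation of $f$ already built into \eqref{f} via $\epsilon>0$, which makes $\rme^{af}$ bounded), with constants independent of $\epsilon$; this is exactly the framework set up before \eqref{modified statement}.
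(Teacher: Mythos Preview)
Your proposal is correct and follows essentially the same approach as the paper's own proof: Cauchy--Schwarz followed by Lemma~\ref{main lemma} to bound $\big\|[\Lambda_1,\rme^{af}]\phi\big\|$ by $Ca\|\rme^{af}\phi\|$, then absorbing $|\varkappa(T,E)|$ into the constant. Your additional remarks on the fibrewise application of the one--particle estimate and on the role of the $\epsilon$--regularisation are accurate and make explicit what the paper leaves implicit.
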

\begin{proof}{}
By \eqref{Q_1} and Lemma~\ref{main lemma} we have
\begin{equation*}\label{Q_1 calculation}
|Q_1|\leqslant \big|\varkappa(T, E)\big|\|\rme^{af}\phi\|\big\|[\rme^{af}, \Lambda_1]\phi\big\|\leqslant Ca\big|\varkappa(T, E)\big|\|\rme^{af}\phi\|^2.
\end{equation*}
\end{proof}

\begin{lemma}\label{Q_2 lemma}
There exists a positive constant $C_2$ such that
\begin{equation}\label{Q_2 estimate}
|Q_2|\leqslant C_2a\|\rme^{af}\phi\|^2.
\end{equation}
\end{lemma}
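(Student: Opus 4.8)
The plan is to bound $Q_2$ term by term, exploiting that the operator sandwiched between $\Lambda_1\rme^{af}\phi$ and $[\Lambda_1,\rme^{af}]\phi$ acts only on particles $2,\dots,N$ (for the $D_n$, $V_n$, $U_{nj}$ pieces), while the commutator $[\Lambda_1,\rme^{af}]$ acts only on the first particle's coordinate. The first observation is that Corollary~\ref{auxiliary corollary} gives $\|\Lambda_1\rme^{af}\phi\|\leqslant\|\rme^{af}\phi\|+\|\rme^{-af}[\Lambda_1,\rme^{af}]\rme^{af}\phi\|$; using \eqref{variation of main estimate} the second summand is $\leqslant C a\|\rme^{af}\phi\|$, so $\|\Lambda_1\rme^{af}\phi\|\leqslant C\|\rme^{af}\phi\|$. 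Hence it suffices to show
\begin{equation*}
\Big\|\Big(\sum_{n=2}^N(D_n+V_n)+\sum_{1<n<j}^N U_{nj}+D_1\Big)[\Lambda_1,\rme^{af}]\phi\Big\|\leqslant C a\|\rme^{af}\phi\|,
\end{equation*}
after which Cauchy--Schwarz finishes the proof.

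To establish that bound I would first split off $D_1[\Lambda_1,\rme^{af}]\phi$. Since $[\Lambda_1,\rme^{af}]\phi$ as a function of $\x_1$ lives in $H^1(\mathbb R^3,\mathbb C^4)$ by Lemma~\ref{main lemma}, and $D_1$ is a first-order operator, $\|D_1[\Lambda_1,\rme^{af}]\phi\|\leqslant C\|[\Lambda_1,\rme^{af}]\phi\|_{H^1_1}\leqslant Ca\|\rme^{af}\phi\|$ directly from \eqref{main estimate} (written for the first-particle coordinate, with $\psi=\phi(\cdot,\x_2,\dots,\x_N)$ and then integrated in the remaining variables). For the cluster-2 kinetic terms $\sum_{n=2}^N D_n[\Lambda_1,\rme^{af}]\phi$, the key point is that $D_n$ and $[\Lambda_1,\rme^{af}]$ act on different particles and hence commute: $\sum_{n\geqslant2}D_n[\Lambda_1,\rme^{af}]\phi=[\Lambda_1,\rme^{af}]\sum_{n\geqslant2}D_n\phi$. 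But $\phi$ is in the form domain, so $\sum_{n\geqslant2}D_n\phi$ is controlled by $\|\phi\|_{H^{1/2}}$; combined with \eqref{variation of main estimate} (which makes $\rme^{-af}[\Lambda_1,\rme^{af}]$ a bounded map on appropriate Sobolev scales, with norm $Ca$) and the fact that multiplication by $\rme^{af}$ is bounded on $H^s$, $s\in[0,2]$, one gets the desired $Ca\|\rme^{af}\phi\|$-type bound. Actually the cleanest route is to write $D_n[\Lambda_1,\rme^{af}]\phi=\rme^{af}\big(\rme^{-af}[\Lambda_1,\rme^{af}]\big)D_n\phi$ and use boundedness of $\rme^{-af}[\Lambda_1,\rme^{af}]:L_2\to H^1$ together with the fact that $D_n\phi\in L_2$ only after the $\Lambda$-projections — so one should instead keep the form-domain structure and use \eqref{upper estimate V}, \eqref{upper estimate U} of Property~\ref{upper assumption old} to pass from $H^{1/2}$-norms of $\rme^{af}\phi$ to the eigenvalue equation's right-hand side; but since $\|\rme^{af}\phi\|_{H^{1/2}}$ is not yet known to be finite uniformly in $\epsilon$, the argument must be arranged so that only $\|\rme^{af}\phi\|$ (the $L_2$-norm) appears, which is exactly why Lemma~\ref{main lemma} is stated with $H^1$ on the left and $L_2$ on the right.

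For the potential terms $\sum_{n=2}^N V_n[\Lambda_1,\rme^{af}]\phi$ and $\sum_{1<n<j}^N U_{nj}[\Lambda_1,\rme^{af}]\phi$ I would again commute: these multiplication operators depend only on coordinates $\x_n$ (resp.\ $\x_n-\x_j$) with indices $\geqslant2$, hence commute with $[\Lambda_1,\rme^{af}]$. So the task reduces to estimating $V_n g$ and $U_{nj}g$ in $L_2$ where $g:=[\Lambda_1,\rme^{af}]\phi$ is, by Lemma~\ref{main lemma}, in $H^1_1$ with $\|g\|_{H^1_1}\leqslant Ca\|\rme^{af}\phi\|$. Property~\ref{U Hardy assumption} bounds $\|U_{nj}g\|$ by $\min_{k=n,j}\|g\|_{H^1_k}$; but $g$ is only known to be regular in the first coordinate, not in $\x_n$ or $\x_j$ — here one uses that $\phi$ itself, being in the form domain $P^TP^E\bigotimes\Lambda_n H^{1/2}$, has additional regularity inherited from the $\Lambda_n$ projectors, or one absorbs the potentials using Property~\ref{upper assumption old} after noting $g\in H^{1/2}(\mathbb R^{3N})$ with $\|g\|_{H^{1/2}}\leqslant C\|g\|_{H^1_1}^{1/2}\|g\|^{1/2}$. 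The honest difficulty, and the step I expect to be the real obstacle, is precisely this regularity bookkeeping: one must estimate everything ultimately in terms of $a\|\rme^{af}\phi\|_{L_2}$ alone, with constants independent of $\epsilon$, which forces one to route the potential estimates through the $H^1$-gain in Lemma~\ref{main lemma} rather than through naive $H^{1/2}$ form bounds on $\rme^{af}\phi$ (whose $H^{1/2}$-norm is the very quantity whose finiteness is still in question). Once the commutations are justified and the Sobolev interpolation is set up carefully, collecting the pieces yields \eqref{Q_2 estimate}.
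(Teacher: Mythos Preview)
Your treatment of the $D_1$ term is fine and matches the paper. The gap is in the cluster-$2$ part, i.e.\ the contribution of $A:=\sum_{n=2}^N(D_n+V_n)+\sum_{1<n<j}^N U_{nj}$. You try to estimate $\langle\Lambda_1\rme^{af}\phi,\,A[\Lambda_1,\rme^{af}]\phi\rangle$ term by term and run into exactly the regularity obstacle you flag: $[\Lambda_1,\rme^{af}]\phi$ gains $H^1$ only in $\x_1$, while $V_n$, $U_{nj}$, $D_n$ for $n\geqslant 2$ require control in the other coordinates, and $\|\rme^{af}\phi\|_{H^{1/2}}$ is not yet known to be finite uniformly in $\epsilon$. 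Your sketches (commuting, interpolating, ``routing through the $H^1$-gain'') do not close this gap; for instance $D_n\phi$ need not lie in $L_2$, and $\|g\|_{H^{1/2}}\leqslant C\|g\|_{H^1_1}^{1/2}\|g\|^{1/2}$ is simply false for the anisotropic space $H^1_1$.

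What you are missing is that this term is \emph{identically zero}. Since $\Lambda_1$ is a projection, $\Lambda_1[\Lambda_1,\rme^{af}]\Lambda_1=\Lambda_1^2\rme^{af}\Lambda_1-\Lambda_1\rme^{af}\Lambda_1^2=0$. Now $A$ commutes with $\Lambda_1$ (it acts on the other particles), and $\phi=\Lambda_1\phi$, so
\[
\langle\Lambda_1\rme^{af}\phi,\,A[\Lambda_1,\rme^{af}]\phi\rangle
=\langle\rme^{af}\phi,\,A\,\Lambda_1[\Lambda_1,\rme^{af}]\Lambda_1\phi\rangle=0.
\]
With this, only the $D_1$ term survives, and your bound $\big|\langle\Lambda_1\rme^{af}\phi,D_1[\Lambda_1,\rme^{af}]\phi\rangle\big|\leqslant\|\rme^{af}\phi\|\,\big\||D_1|[\Lambda_1,\rme^{af}]\phi\big\|\leqslant Ca\|\rme^{af}\phi\|^2$ via Lemma~\ref{main lemma} finishes the proof in one line. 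The algebraic vanishing is the whole point of the lemma; there is no need for any Sobolev bookkeeping on the cluster-$2$ operators.
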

\begin{proof}{}
Since $\Lambda_1$ commutes with $\sum_{n= 2}^N(D_n+ V_n)+ \sum_{1< n< j}^NU_{nj}$, $\phi= \Lambda_1\phi$, and $\Lambda_1[\Lambda_1, \rme^{af}]\Lambda_1= 0$, we have
\begin{equation}\label{zero}
\langle\Lambda_1\rme^{af}\phi, \bigg(\sum_{n= 2}^N(D_n+ V_n)+ \sum_{1< n< j}^NU_{nj}\bigg)[\Lambda_1, \rme^{af}]\phi\rangle= 0.
\end{equation}
According to Lemma~\ref{main lemma}
\begin{equation*}\label{little chain}
\big|\langle\Lambda_1\rme^{af}\phi, D_1[\Lambda_1, \rme^{af}]\phi\rangle\big|\leqslant \|\Lambda_1\rme^{af}\phi\|\big\||D_1|[\Lambda_1, \rme^{af}]\phi\big\|\leqslant Ca\|\rme^{af}\phi\|^2.
\end{equation*}
By \eqref{Q_2} and \eqref{zero} this implies \eqref{Q_2 estimate}.
\end{proof}

\begin{lemma}\label{Q_3 lemma}
There exists a positive constant $C_3$ such that
\begin{equation}\label{Q_3 estimate}
|Q_3|\leqslant C_3a\|\rme^{af}\phi\|^2.
\end{equation}
\end{lemma}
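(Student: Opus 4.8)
The plan is to compute the commutator $[D_1,\rme^{af}]$ in closed form and then apply the Cauchy--Schwarz inequality, exploiting that $\Lambda_1$ is an orthogonal projector and that $\|\nabla f\|_{L_\infty}<1$ by \eqref{gradient boundedness of f}.

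First I would note that in $D_1=-i\al_1\cdot\nabla_1+\beta_1 m_1$ the mass term $\beta_1 m_1$ is a multiplication operator and hence commutes with $\rme^{af}$, so only the kinetic part contributes. Since $f$ depends on $\x_1$ alone and $\nabla_1\rme^{af}=a\,\rme^{af}\nabla_1 f$, the Leibniz rule gives
\begin{equation*}
[D_1,\rme^{af}]=[-i\al_1\cdot\nabla_1,\rme^{af}]=-i\al_1\cdot(\nabla_1\rme^{af})=-ia\,\rme^{af}\,(\al_1\cdot\nabla_1 f).
\end{equation*}
In particular $[D_1,\rme^{af}]$ is the operator of multiplication by the bounded matrix-valued function $-ia\,\rme^{af}(\al_1\cdot\nabla_1 f)$; in contrast to the commutator $[\Lambda_1,\rme^{af}]$ entering $Q_2$, no smoothing from $\Lambda_1$ is required, which is why this is the most elementary of the four $Q$-estimates. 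The only point deserving a word is that this identity, initially valid on $\Dom(D_1)$, extends to a genuine bounded operator on $L_2(\mathbb R^{3N},\mathbb C^{4^N})$, so that the inner product in \eqref{Q_3} is well defined for $\phi$ merely in the form domain $H^{1/2}$.

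Substituting into the definition \eqref{Q_3} and using $(\al\cdot\mathbf v)^2=|\mathbf v|^2 I$ for $\mathbf v\in\mathbb R^3$, which gives the pointwise bound $|(\al_1\cdot\nabla_1 f)(\x_1)|=|\nabla_1 f(\x_1)|\leqslant\|\nabla f\|_{L_\infty}$, the Cauchy--Schwarz inequality yields
\begin{equation*}
|Q_3|\leqslant a\,\|\Lambda_1\rme^{af}\phi\|\,\big\|\rme^{af}(\al_1\cdot\nabla_1 f)\phi\big\|\leqslant a\,\|\nabla f\|_{L_\infty}\,\|\Lambda_1\rme^{af}\phi\|\,\|\rme^{af}\phi\|.
\end{equation*}
Since $\Lambda_1$ is an orthogonal projector we have $\|\Lambda_1\rme^{af}\phi\|\leqslant\|\rme^{af}\phi\|$, and $\|\nabla f\|_{L_\infty}<1$ by \eqref{gradient boundedness of f}; hence $|Q_3|\leqslant a\|\rme^{af}\phi\|^2$ and the lemma holds with $C_3=1$. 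I expect no real obstacle here: the argument is a one-line commutator computation followed by Cauchy--Schwarz, the subtleties addressed elsewhere (mapping properties of $\rme^{af}$, the structure of $\Lambda_1$) playing no role beyond the trivial projector bound.
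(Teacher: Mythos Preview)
Your proof is correct and follows essentially the same route as the paper: compute $[D_1,\rme^{af}]=-ia(\al\cdot\nabla f)\rme^{af}$ explicitly and then invoke \eqref{gradient boundedness of f}. The paper's argument is the same two-line computation, only stated more tersely; your additional remarks on the projector bound and the explicit constant $C_3=1$ are fine but not needed for the paper's purposes.
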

\begin{proof}{}
We have $[D_1, \rme^{af}]= [-i\al\cdot\nabla, \rme^{af}]= -i\al\cdot(\nabla \rme^{af})= -i\al\cdot a(\nabla f)\rme^{af}$.
Now \eqref{Q_3 estimate} follows from \eqref{Q_3} and \eqref{gradient boundedness of f}.
\end{proof}

\begin{lemma}\label{Q_4 lemma}
There exist $C_4> 0$ and $C_0(a)> 0$ such that
\begin{equation}\label{Q_4 estimate}
Q_4\leqslant C_4a\|\rme^{af}\phi\|^2+ C_0(a)\|\phi\|_{H^{1/2}}^2.
\end{equation}
\end{lemma}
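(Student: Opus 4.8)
The plan is to rewrite $Q_4$ so that the singular potentials $V_1, U_{1j}$ always act on functions for which we control either an $H^1_1$-norm (the commutator remainders, which are small in $a$) or a local $H^{1/2}$-norm, and to split off the region $\{|\x_1|> R\}$ where the decay \eqref{V_n decay again} beats the growth of $\rme^{af}$. The point exploited throughout is that on any fixed ball $\{|\x_1|\leqslant 2R\}$ the function $f$ satisfies $f\leqslant\rho(|\x_1|)\leqslant 2R$ and, by \eqref{rho}, \eqref{f}, \eqref{gradient boundedness of f}, $f$, $\nabla f$, $\nabla^2 f$ are bounded by constants independent of $\epsilon$ (since $1+\epsilon\rho\geqslant 1$), so multiplication by a compactly supported smoothing of $\rme^{af}$ in $\x_1$ is bounded on $H^{1/2}_1$ with an $\epsilon$-independent norm.

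Write $W:= V_1+\sum_{j=2}^NU_{1j}$. Moving the commuting multiplication factor $\rme^{-af}$ gives $Q_4= -\langle \rme^{-af}\Lambda_1\rme^{af}\Lambda_1\rme^{af}\phi,\, W\rme^{af}\phi\rangle$. Using $\Lambda_1^2=\Lambda_1$, $\Lambda_1\rme^{af}= \rme^{af}\Lambda_1+[\Lambda_1,\rme^{af}]$ and $\phi=\Lambda_1\phi$ one decomposes
\begin{equation*}
\rme^{-af}\Lambda_1\rme^{af}\Lambda_1\rme^{af}\phi= \rme^{af}\phi+ [\Lambda_1,\rme^{af}]\phi+ \rme^{-af}[\Lambda_1,\rme^{af}]\Lambda_1\rme^{af}\phi=: \rme^{af}\phi+ r_1+ r_2.
\end{equation*}
Applying \eqref{main estimate} and \eqref{variation of main estimate} in the variable $\x_1$ fibrewise over $\x_2,\dots,\x_N$ (with $\psi=\phi$ and $\psi=\Lambda_1\rme^{af}\phi$ respectively) gives $\|r_1\|_{H^1_1}+\|r_2\|_{H^1_1}\leqslant Ca\|\rme^{af}\phi\|$. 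Since $V_1$ and each $U_{1j}$ satisfy a Hardy-type bound $\|\,\cdot\,\psi\|\leqslant C\|\psi\|_{H^1_1}$ (Property~\ref{U Hardy assumption} for $U_{1j}$, and Hardy's inequality centred at each $\mathbf r_k$ for $V_1$, as in the proof of Property~\ref{U Hardy assumption}), moving $W$ onto $r_1+r_2$ and using Cauchy--Schwarz yields $|\langle r_1+r_2,\,W\rme^{af}\phi\rangle|\leqslant Ca\|\rme^{af}\phi\|^2$.

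It remains to bound $-\langle \rme^{af}\phi,\,W\rme^{af}\phi\rangle$, which for fixed $\epsilon$ is a finite real number by \eqref{upper estimate V} and \eqref{upper estimate U}. By Property~\ref{U positivity assumption} each $\langle\rme^{af}\phi,U_{1j}\rme^{af}\phi\rangle\geqslant 0$, so $-\langle \rme^{af}\phi,\,W\rme^{af}\phi\rangle\leqslant |\langle \rme^{af}\phi,V_1\rme^{af}\phi\rangle|$. Given $\varepsilon>0$ pick $R=R(\varepsilon)$ with $\esssup_{|\x_1|>R}|V_1|<\varepsilon$ (Property~\ref{decay assumption again}) and $\chi_R\in C^2(\mathbb R^3)$ with $0\leqslant\chi_R\leqslant 1$, $\chi_R\equiv 1$ on $\{|\x_1|\leqslant R\}$, $\supp\chi_R\subset\{|\x_1|\leqslant 2R\}$. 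The tail is estimated trivially, $|\langle \rme^{af}\phi,(1-\chi_R^2)V_1\rme^{af}\phi\rangle|\leqslant\varepsilon\|\rme^{af}\phi\|^2$, while for the bounded part, writing $\chi_R^2V_1=\chi_RV_1\chi_R$ and using \eqref{upper estimate V},
\begin{equation*}
\big|\langle \rme^{af}\phi,\chi_R^2V_1\rme^{af}\phi\rangle\big|= \big|\langle \chi_R\rme^{af}\phi,V_1\chi_R\rme^{af}\phi\rangle\big|\leqslant C\|\chi_R\rme^{af}\phi\|_{H^{1/2}_1}^2\leqslant C(a,R)\|\phi\|_{H^{1/2}}^2,
\end{equation*}
where the last step uses the $\epsilon$-independent boundedness of multiplication by $\chi_R\rme^{af}$ on $H^{1/2}_1(\mathbb R^{3N},\mathbb C^{4^N})$ noted above. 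Choosing $\varepsilon=a$ (hence $R=R(a)$) and collecting all bounds yields \eqref{Q_4 estimate} with $C_4$ independent of $a$ and $C_0(a)=C(a,R(a))$.

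The main obstacle is precisely the last display: multiplication by $\rme^{af}$ is not bounded on $H^{1/2}$ uniformly in $\epsilon$, so one must first localize the weight to a fixed ball and only there invoke the $H^{1/2}$-form bound on $V_1$; it is this localization, combined with the coupling $\varepsilon=a$ of the potential-decay cut-off to the exponential rate, that simultaneously keeps $C_4$ independent of $a$ and confines the $\epsilon$-dependence-free divergence into $C_0(a)\|\phi\|_{H^{1/2}}^2$.
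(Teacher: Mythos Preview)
Your argument is correct. The commutator decomposition of $\rme^{-af}\Lambda_1\rme^{af}\Lambda_1\rme^{af}\phi$ into $\rme^{af}\phi+r_1+r_2$, the $H^1_1$-bounds on $r_1,r_2$ via Lemma~\ref{main lemma}, the use of positivity of $U_{1j}$ to discard those terms, and the near/far splitting of $V_1$ with the observation that $\chi_R\rme^{af}$ is an $\epsilon$-independent bounded multiplier on $H^{1/2}_1$ (by Lemma~\ref{multiplicator lemma}) are all valid.

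Your route differs from the paper's in a genuine way. The paper performs the commutator decomposition only for the $U_{1j}$-terms (its equation (8.1)); for $V_1$ it keeps the full expression $\langle\Lambda_1\rme^{af}\Lambda_1\rme^{af}\phi, V_1\phi\rangle$ and, after the $\chi_R$-splitting, controls the near-field factor $\big\|(1-\chi_R)\Lambda_1\rme^{af}\Lambda_1\rme^{af}\phi\big\|_{H^{1/2}_1}$ by invoking Lemma~\ref{projection in ball lemma} twice (once to pass from $\Lambda_1\rme^{af}\Lambda_1\rme^{af}\phi$ in $B_{2R}$ to $\rme^{af}\Lambda_1\rme^{af}\phi$ in $B_{6R}$, then again to reach $\rme^{af}\phi$ in $B_{18R}$). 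Your approach is more elementary: by peeling off both $\Lambda_1$'s at the outset, the near-field term no longer contains the nonlocal projector, and Lemma~\ref{projection in ball lemma} is never needed. The price you pay is that bounding $\langle r_1+r_2, V_1\rme^{af}\phi\rangle$ requires an estimate $\|V_1\psi\|\leqslant C\|\psi\|_{H^1_1}$, which is of course immediate from Hardy's inequality for the Coulomb potentials \eqref{V_n}, but is \emph{not} among the abstract Properties~\ref{square integrability assumption old}--\ref{lower assumption old}. The paper's more laborious route uses only the form bound \eqref{upper estimate V} on $V_n$, so its proof goes through for the generalized potentials of Remark~\ref{generalized decay theorem} without any additional assumption; your proof would require adding a $V_n$-analogue of Property~\ref{U Hardy assumption} to that list.
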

\noindent We give a proof of Lemma~\ref{Q_4 lemma} in Section~\ref{Q_4 proof section}.

Substituting the estimates \eqref{Q_1 estimate}, \eqref{Q_2 estimate}, \eqref{Q_3 estimate}, and \eqref{Q_4 estimate} into \eqref{the main estimate}, we conclude that
\begin{equation}\label{main revisited}
\Big(\varkappa(T, E)- \lambda- a\sum_{l= 1}^4C_l\Big)\|\rme^{af}\phi\|^2\leqslant C_0(a)\|\phi\|_{H^{1/2}}^2.
\end{equation}
Now if
\begin{equation*}
a< \min\bigg\{\frac12, \Big(\sum_{l= 1}^4C_l\Big)^{-1}\big(\varkappa(T, E)- \lambda\big)\bigg\},
\end{equation*}
then the expression in brackets on the \lhs of \eqref{main revisited} is positive, and \eqref{main revisited} implies \eqref{modified statement} with a finite $C$ independent of $\epsilon$. Theorem~\ref{decay theorem} is proved.

\section{Boundedness of integral operators}\label{integral operators section}

In this section we collect some auxiliary material for the subsequent proofs of Lemmata~\ref{main lemma}, \ref{projection in ball lemma}, and \ref{Q_4 lemma}.
In order to be able to obtain the information on the boundedness of (singular) integral operators we will need the following two theorems:
\begin{theorem}\label{Stein theorem}
{\bfseries\em (Stein \cite{Stein1970}, Chapter~2, Section~3.2)} Let $K:\mathbb{R}^n\rightarrow\mathbb{C}$ be a measurable function such that for some $B> 0$
\begin{equation*}\label{condition 1}
\big|K(\x)\big|\leqslant B|\x|^{-n},\quad \big|\nabla K(\x)\big|\leqslant B|\x|^{-n- 1},\quad \textrm{for almost every}\quad \x\in \mathbb R^n, 
\end{equation*}
and
\begin{equation*}\label{condition 3}
\int_{R_1< |\x|< R_2}K(\x)\rmd^n\x= 0, \quad\text{for all}\quad 0< R_1< R_2< \infty.
\end{equation*}
For $g\in L_p(\mathbb{R}^n),\: 1< p< \infty$, let
\begin{equation*}\label{convolution}
A_\varepsilon(g)(\x):= \int_{|\x- \y|\geqslant \varepsilon}K(\x- \y)g(\y)\rmd^n\y,\quad \varepsilon> 0.
\end{equation*}
Then
\begin{equation}\label{boundedness}
\big\|A_\varepsilon(g)\big\|_p\leqslant B_p\|g\|_p
\end{equation}
with $B_p$ independent of $g$ and $\varepsilon$.
\end{theorem}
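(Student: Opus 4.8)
The plan is to prove \eqref{boundedness} first for $p= 2$ by a direct Fourier-transform argument, then for $1< p< 2$ via a weak-type $(1, 1)$ bound combined with interpolation, and finally for $2< p< \infty$ by duality; throughout, the truncated kernels $K_\varepsilon(\x):= K(\x)\mathbf 1_{\{|\x|\geqslant \varepsilon\}}$ belong to $L_1+ L_\infty$, and the crux is to keep every constant independent of $\varepsilon$. For $p= 2$, since $A_\varepsilon(g)= K_\varepsilon\ast g$, by Plancherel it suffices to bound $\sup_{\varepsilon> 0}\|\widehat{K_\varepsilon}\|_{L_\infty}$ in terms of $B$ and $n$. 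Fixing $\xi\neq 0$, I would split $\widehat{K_\varepsilon}(\xi)$ into the integral over $\varepsilon\leqslant |\x|\leqslant |\xi|^{-1}$, where the annular-cancellation hypothesis lets one replace $\rme^{-i\x\cdot\xi}$ by $\rme^{-i\x\cdot\xi}- 1$ and then use $|\rme^{-i\x\cdot\xi}- 1|\leqslant |\x|\,|\xi|$ together with $|K(\x)|\leqslant B|\x|^{-n}$, and the integral over $|\x|> |\xi|^{-1}$, where one exploits the oscillation of $\rme^{-i\x\cdot\xi}$ through the half-period translation $\x\mapsto \x+ \pi\xi|\xi|^{-2}$ and estimates the resulting difference of translates of $K_\varepsilon$ by means of the gradient bound $|\nabla K(\x)|\leqslant B|\x|^{-n- 1}$. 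Both pieces are $O(B)$, so $\|A_\varepsilon\|_{L_2\to L_2}\leqslant C(n)B$ uniformly in $\varepsilon$.

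Next, for the weak-type $(1, 1)$ bound, take $g\in L_1\cap L_2$ and $\alpha> 0$ and apply the Calder\'on--Zygmund decomposition at height $\alpha$: $g= h+ \sum_j b_j$, with $\|h\|_{L_\infty}\leqslant C\alpha$, $\|h\|_{L_1}\leqslant \|g\|_{L_1}$, $\supp b_j\subset Q_j$, $\int b_j= 0$, $\|b_j\|_{L_1}\leqslant C\alpha|Q_j|$, and $\sum_j|Q_j|\leqslant C\alpha^{-1}\|g\|_{L_1}$. The good part is controlled by the $L_2$ bound and Chebyshev's inequality, giving $\big|\{|A_\varepsilon h|> \alpha/2\}\big|\leqslant C\alpha^{-2}\|h\|_{L_2}^2\leqslant C\alpha^{-1}\|g\|_{L_1}$. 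For the bad part, let $Q_j^\ast$ be the concentric double of $Q_j$; then $\big|\bigcup_j Q_j^\ast\big|\leqslant C\alpha^{-1}\|g\|_{L_1}$, and on the complement one uses $\int b_j= 0$ together with the H\"ormander-type estimate
\begin{equation*}
\int_{|\x|> 2|\y|}\big|K(\x- \y)- K(\x)\big|\rmd^n\x\leqslant C B,
\end{equation*}
which follows by integrating the gradient bound along the segment from $\x- \y$ to $\x$ (and which persists, uniformly in $\varepsilon$, for $K_\varepsilon$), to obtain $\int_{\mathbb R^n\setminus\bigcup_k Q_k^\ast}|A_\varepsilon b_j|\,\rmd^n\x\leqslant C B\|b_j\|_{L_1}$. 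Summing in $j$ and applying Chebyshev once more yields $\big|\{|A_\varepsilon g|> \alpha\}\big|\leqslant C B\alpha^{-1}\|g\|_{L_1}$, again uniformly in $\varepsilon$.

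Finally, the strong $(2, 2)$ and weak $(1, 1)$ bounds feed into the Marcinkiewicz interpolation theorem to give \eqref{boundedness} for $1< p\leqslant 2$. For $2< p< \infty$, the formal adjoint of $A_\varepsilon$ has kernel $\widetilde K_\varepsilon(\x)= \overline{K_\varepsilon(- \x)}$, and $\widetilde K(\x):= \overline{K(- \x)}$ satisfies the same size, gradient, and annular-cancellation hypotheses as $K$; hence $A_\varepsilon^\ast$ is bounded on $L_{p'}$ with constant $B_{p'}$ for $p'= p/(p- 1)\in (1, 2)$, and duality gives $\|A_\varepsilon\|_{L_p\to L_p}\leqslant B_{p'}$, a density argument extending the estimate to all of $L_p$. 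I expect the main obstacle to be the bad-part analysis: executing the Calder\'on--Zygmund decomposition and verifying the H\"ormander-type cancellation estimate from the pointwise gradient bound while retaining uniformity in the truncation parameter $\varepsilon$, with the oscillatory estimate in the far region of the $L_2$ step the secondary difficulty.
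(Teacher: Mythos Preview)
The paper does not give a proof of this theorem at all; it is stated as a quotation from Stein's book (Chapter~II, \S3.2) and used as a black box in the subsequent kernel estimates. Your proposal correctly outlines the classical Calder\'on--Zygmund argument that Stein gives in the cited reference: uniform boundedness of $\widehat{K_\varepsilon}$ via the cancellation/oscillation splitting to get the $L_2$ estimate, the weak-type $(1,1)$ bound from the Calder\'on--Zygmund decomposition together with the H\"ormander condition (which indeed follows from the pointwise gradient bound), Marcinkiewicz interpolation for $1< p< 2$, and duality for $p> 2$. The only point worth watching is the one you flag yourself---uniformity in $\varepsilon$ of the H\"ormander estimate for the truncated kernel $K_\varepsilon$, where the jump at $|\x|= \varepsilon$ must be accounted for separately using the size bound $|K(\x)|\leqslant B|\x|^{-n}$---but this is routine and handled in Stein's text.
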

\begin{remark}
Inequality \eqref{boundedness} shows that the operator $A:= \underset{\varepsilon\rightarrow +0}{\lim}A_\varepsilon$ exists as a bounded operator in $L_p(\mathbb{R}^n)$ and its norm satisfies $\|A\|_p\leqslant B_p$.
\end{remark}
The second theorem is known as Schur's test:
\begin{theorem}\label{Schur test theorem}
Let $(\Omega_1, \mu_1)$ and $(\Omega_2, \mu_2)$ be two spaces with measures.
Let $A(\cdot, \cdot)$ be a measurable (matrix) function on $\Omega_1\times\Omega_2$ satisfying
\begin{equation*}\label{Schur test}
M_1:= \sup_{\y\in\Omega_2}\int_{\Omega_1}\big|A(\x, \y)\big|\rmd\mu_1(\x)< \infty, \quad M_2:= \sup_{\x\in\Omega_1}\int_{\Omega_2}\big|A(\x, \y)\big|\rmd\mu_2(\y)< \infty.
\end{equation*}
Then the integral operator
\begin{equation*}\label{integral operator}
(A\psi)(\x):= \int_{\Omega_2}A(\x, \y)\psi(\y)\rmd\mu_2(\y)
\end{equation*}
is bounded from $L_2(\Omega_2)$ to $L_2(\Omega_1)$ and $\|A\|\leqslant \sqrt{M_1M_2}$.
\end{theorem}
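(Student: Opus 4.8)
This is the classical Schur test, and the plan is to deduce it from the Cauchy--Schwarz inequality by splitting the kernel as $|A(\x,\y)|= |A(\x,\y)|^{1/2}\cdot|A(\x,\y)|^{1/2}$ and weighting one factor against $|\psi(\y)|$.

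First I would fix $\psi\in L_2(\Omega_2,\mu_2)$ and note the pointwise bound, valid for $\mu_1$-almost every $\x\in\Omega_1$,
\begin{equation*}
\big|(A\psi)(\x)\big|\leqslant \int_{\Omega_2}\big|A(\x,\y)\big|\,\big|\psi(\y)\big|\,\rmd\mu_2(\y),
\end{equation*}
where in the matrix--valued case $|A(\x,\y)|$ is to be read as the operator norm of the matrix, so that $|A(\x,\y)\psi(\y)|\leqslant|A(\x,\y)|\,|\psi(\y)|$. Applying the Cauchy--Schwarz inequality in $L_2(\Omega_2,\mu_2)$ to the product $\big(|A(\x,\y)|^{1/2}\big)\big(|A(\x,\y)|^{1/2}|\psi(\y)|\big)$ and using the definition of $M_2$ yields
\begin{equation*}
\big|(A\psi)(\x)\big|^2\leqslant \Big(\int_{\Omega_2}\big|A(\x,\y)\big|\,\rmd\mu_2(\y)\Big)\Big(\int_{\Omega_2}\big|A(\x,\y)\big|\,\big|\psi(\y)\big|^2\,\rmd\mu_2(\y)\Big)\leqslant M_2\int_{\Omega_2}\big|A(\x,\y)\big|\,\big|\psi(\y)\big|^2\,\rmd\mu_2(\y).
\end{equation*}

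Next I would integrate this inequality in $\x$ over $\Omega_1$. Since the integrand $|A(\x,\y)|\,|\psi(\y)|^2$ is nonnegative and measurable on $\Omega_1\times\Omega_2$, Tonelli's theorem allows interchanging the order of integration, and the definition of $M_1$ gives
\begin{equation*}
\int_{\Omega_1}\big|(A\psi)(\x)\big|^2\,\rmd\mu_1(\x)\leqslant M_2\int_{\Omega_2}\big|\psi(\y)\big|^2\Big(\int_{\Omega_1}\big|A(\x,\y)\big|\,\rmd\mu_1(\x)\Big)\rmd\mu_2(\y)\leqslant M_1 M_2\,\|\psi\|^2.
\end{equation*}
This shows simultaneously that $A\psi\in L_2(\Omega_1,\mu_1)$, so that $A$ is a well--defined operator from $L_2(\Omega_2)$ to $L_2(\Omega_1)$, and that $\|A\psi\|\leqslant\sqrt{M_1M_2}\,\|\psi\|$, which is the asserted norm estimate.

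There is no substantial obstacle in this argument; the only points deserving a word of care are the appeal to Tonelli's theorem (legitimate by the nonnegativity and joint measurability of $|A|$) and, in the matrix--valued setting, the convention that $|A(\x,\y)|$ denotes the matrix operator norm, which is precisely what makes the pointwise estimate used in the first step correct.
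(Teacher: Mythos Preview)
Your argument is the standard, correct proof of Schur's test via Cauchy--Schwarz and Tonelli, and there is nothing to fault in it. Note, however, that the paper does not actually prove this theorem: it is stated in Section~\ref{integral operators section} as a well--known auxiliary result without proof, so there is no ``paper's own proof'' to compare against. Your write--up would serve perfectly well as the omitted justification.
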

We will only use Theorem~\ref{Schur test theorem} in the case $\Omega_1= \Omega_2= \mathbb R^3$ with Lebesgue measure.

Note that in the case of convolution (i.e. for $A(\x, \y)= A(\x- \y)$, $\Omega_1= \Omega_2= \mathbb R^d$) Theorem~\ref{Schur test theorem} reduces to Young's inequality for convolution with $L_1$--function (see e. g. \cite{ReedSimon1975}).

For a $4\times 4$ measurable matrix function $A$ on $\mathbb{R}^3\times\mathbb{R}^3$ we define the corresponding integral operator by
\begin{equation}\label{int to kernel}
(Ag)(\x):= \underset{\varepsilon\rightarrow +0}{\lim}\int_{|\x- \y|> \varepsilon}A(\x, \y)g(\y)\rmd\y, \quad g\in C^1_0(\mathbb{R}^3, \mathbb{C}^4).
\end{equation}
We will only work with such $A$ for which \eqref{int to kernel} is well defined and extends to a bounded operator in $L_2(\mathbb{R}^3, \mathbb{C}^4)$ either by Theorem~\ref{Stein theorem} (in which case $A(\x, \y)$ has to depend only on $(\x- \y)$), or by Theorem~\ref{Schur test theorem}.

In particular, according to the definition given above and Appendix B of \cite{MorozovVugalter2006}, the integral kernel of $(\Lambda_m- 1/2)$ is
\begin{equation}\begin{split}\label{the integral kernel of Lambda}
\mathcal K(\x, \y)&= \mathcal K(\x- \y):= \frac{im}{2\pi^2}\frac{\al\cdot(\x- \y)}{\rxy^3}K_1\big(m\rxy\big)\\ &+
\frac{m^2}{4\pi^2}\Big(\beta\frac{K_1\big(m|\x- \y|\big)}{|\x- \y|}+ \frac{i\al\cdot(\mathbf{x}- \mathbf{y})}{|\x- \y|^2}K_0\big(m\rxy\big)\Big).
\end{split}\end{equation}
The boundedness follows from Theorem~\ref{Stein theorem} and \eqref{derivatives}.

Note that the function \eqref{the integral kernel of Lambda} rapidly decays together with its derivatives if $|\x- \y|$ becomes big. Namely, if for $r> 0$ we define
\begin{equation}\label{G}
G(r):= \sup_{|\x- \y|> r}\big|\mathcal K(\x, \y)\big|+ \sup_{|\x- \y|> r}\big|\nabla_\x\mathcal K(\x, \y)\big|,
\end{equation}
then by \eqref{derivatives} and the first asymptotic in \eqref{asymptotics}, for any $R> 0$ there exists $C(R)> 0$ such that
\begin{equation}\label{decay of G}
G(r)\leqslant C(R)r^{-3/2}\rme^{-r}, \quad  \textrm{for all}\quad r\geqslant R.
\end{equation}

We will also use the following elementary lemma (Lemma~10 of \cite{Morozov2008}):
\begin{lemma}\label{multiplicator lemma}
For any $d, k\in\mathbb{N}$ there exists $C> 0$ such that for any bounded differentiable function $\chi$ on $\mathbb R^d$ with bounded gradient and $u\in H^{1/2}(\mathbb{R}^d, \mathbb{C}^k)$
\begin{equation*}\label{multiplier}
\|\chi u\|_{H^{1/2}(\mathbb{R}^d, \mathbb{C}^k)}\leqslant C\big(\|\chi\|_{L_\infty(\mathbb{R}^d)}+ \|\nabla\chi\|_{L_\infty(\mathbb{R}^d)}\big)\|u\|_{H^{1/2}(\mathbb{R}^d, \mathbb{C}^k)}.
\end{equation*}
\end{lemma}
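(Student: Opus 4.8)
The plan is to deduce the bound from the Aronszajn--Slobodeckij (Gagliardo) description of the $H^{1/2}$--norm. It suffices to treat the scalar case $k= 1$, the general case following by summing over the $k$ components. I would use that there is $C_d\geqslant 1$, depending only on the dimension $d$, with
\begin{equation*}
C_d^{- 1}\|u\|_{H^{1/2}(\mathbb R^d)}^2\leqslant \|u\|_{L_2(\mathbb R^d)}^2+ \int_{\mathbb R^d}\int_{\mathbb R^d}\frac{|u(\x)- u(\y)|^2}{\rxy^{d+ 1}}\,\rmd\x\,\rmd\y\leqslant C_d\|u\|_{H^{1/2}(\mathbb R^d)}^2,
\end{equation*}
so that it is enough to estimate, for $\chi u$, the $L_2$--norm (which is at most $\|\chi\|_{L_\infty}\|u\|_{L_2}$) and the double integral on the right--hand side.

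For the double integral I would start from the pointwise bound
\begin{equation*}
|\chi(\x)u(\x)- \chi(\y)u(\y)|\leqslant \|\chi\|_{L_\infty}\,|u(\x)- u(\y)|+ |u(\y)|\,|\chi(\x)- \chi(\y)|.
\end{equation*}
After squaring, the contribution of the first summand is at most $2\|\chi\|_{L_\infty}^2$ times the seminorm of $u$. For the second summand, Fubini's theorem and the estimate $|\chi(\x)- \chi(\y)|\leqslant \min\{2\|\chi\|_{L_\infty}, \|\nabla\chi\|_{L_\infty}\rxy\}$ (the first bound being trivial, the second the mean value theorem) reduce the task to estimating the $\y$--independent quantity
\begin{equation*}
\int_{\mathbb R^d}\min\big\{2\|\chi\|_{L_\infty},\, \|\nabla\chi\|_{L_\infty}|\mathbf{z}|\big\}^2\,|\mathbf{z}|^{- d- 1}\,\rmd\mathbf{z}.
\end{equation*}
Splitting this integral at the radius $R:= \|\chi\|_{L_\infty}/\|\nabla\chi\|_{L_\infty}$ — the degenerate cases $\|\chi\|_{L_\infty}= 0$ or $\|\nabla\chi\|_{L_\infty}= 0$ being trivial — the part $|\mathbf{z}|< R$ has integrand $\|\nabla\chi\|_{L_\infty}^2|\mathbf{z}|^{1- d}$ and contributes a multiple of $\|\nabla\chi\|_{L_\infty}^2R$, while the part $|\mathbf{z}|\geqslant R$ has integrand $4\|\chi\|_{L_\infty}^2|\mathbf{z}|^{- d- 1}$ and contributes a multiple of $\|\chi\|_{L_\infty}^2R^{- 1}$; both are of size $C_d\|\chi\|_{L_\infty}\|\nabla\chi\|_{L_\infty}$.

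Collecting the estimates, the double integral applied to $\chi u$ is bounded by $C_d\big(\|\chi\|_{L_\infty}^2+ \|\chi\|_{L_\infty}\|\nabla\chi\|_{L_\infty}\big)\|u\|_{H^{1/2}}^2$, and together with the $L_2$--bound and the elementary inequality $\|\chi\|_{L_\infty}\|\nabla\chi\|_{L_\infty}\leqslant\big(\|\chi\|_{L_\infty}+ \|\nabla\chi\|_{L_\infty}\big)^2$ this yields the claimed estimate. An alternative and shorter route would be to observe that multiplication by $\chi$ is bounded on $L_2$ with norm $\|\chi\|_{L_\infty}$ and on $H^1$ with norm $\leqslant\|\chi\|_{L_\infty}+ \|\nabla\chi\|_{L_\infty}$ (Leibniz rule), and then to interpolate using $H^{1/2}= [L_2, H^1]_{1/2}$. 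I do not expect a serious obstacle in either approach; the one point that requires a little care is to organize the splitting radius $R$ so that the resulting constant depends \emph{affinely} on $\|\chi\|_{L_\infty}+ \|\nabla\chi\|_{L_\infty}$, as the statement demands, rather than quadratically.
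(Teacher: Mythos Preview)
Your argument is correct. Both routes you sketch --- the Gagliardo seminorm computation and the $[L_2,H^1]_{1/2}$ interpolation --- are standard and go through without difficulty; the radius splitting at $R=\|\chi\|_{L_\infty}/\|\nabla\chi\|_{L_\infty}$ is exactly what makes the constant come out linear in $\|\chi\|_{L_\infty}+\|\nabla\chi\|_{L_\infty}$ after taking the square root.

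As for comparison: the paper does not prove this lemma at all. It is quoted as an ``elementary lemma'' and referenced to Lemma~10 of \cite{Morozov2008}. So your write--up is strictly more than what the paper provides; either of your two approaches would serve as a complete, self--contained proof where the paper has only a citation.
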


\section{Proof of Lemma~\ref{main lemma}}\label{proof of the main lemma}
\markright{\hfill\thesection\quad Proof of Lemma~\ref{main lemma}\hfill}

To prove \eqref{main estimate} it is enough to show that $[\Lambda_1, \rme^{af}]\rme^{-af}$ is a bounded operator from $L_2(\mathbb R^3, \mathbb C^4)$ to $H^1(\mathbb R^3, \mathbb C^4)$ satisfying
\begin{equation}\label{estimate for the second term}
\big\|[\Lambda_1, \rme^{af}]\rme^{-{af}}\big\|_{L_2(\mathbb{R}^3, \mathbb{C}^4)\rightarrow H^1(\mathbb{R}^3, \mathbb{C}^4)}\leqslant C(a_0)a, \quad a\in [0, 1).
\end{equation}
The integral kernel of $[\Lambda_1, \rme^{af}]\rme^{-{af}}= \big[(\Lambda_1- 1/2), \rme^{af}]\rme^{-{af}}$ is given by (see \eqref{the integral kernel of Lambda})
\begin{equation}\label{integral kernel 1}\begin{split}
\big([\Lambda_1, \rme^{af}]\rme^{-{af}}\big)(\x, \y)&= \mathcal K(\x, \y)(1- \rme^{a(f(\x)- f(\y))}),
\end{split}\end{equation}
and its gradient in $\x$ is 
\begin{equation}\begin{split}\label{integral kernel 2}
&\big(\nabla[\Lambda_1, \rme^{af}]\rme^{-{af}}\big)(\x, \y)= (\nabla_\x\mathcal K)(\x, \y)(1- \rme^{a(f(\x)- f(\y))})\\ &+ a\mathcal K(\x, \y)(1- \rme^{a(f(\x)- f(\y))})(\nabla f)(\x)- a\mathcal K(\x, \y)(\nabla f)(\x).
\end{split}\end{equation}
We rewrite
\begin{equation}\label{as Taylor}
1- \rme^{a(f(\x)- f(\y))}= -a(\nabla f)(\y)\cdot(\x- \y)+ R_1(\x, \y)+ R_2(\x, \y),
\end{equation}
where
\begin{equation*}\label{R_1}
R_1(\x, \y):= 1+ a\big(f(\x)- f(\y)\big)- \rme^{a(f(\x)- f(\y))},
\end{equation*}
and
\begin{equation*}\label{R_2}
R_2(\x, \y):= a\big((\nabla f)(\y)\cdot(\x- \y)+ f(\y)- f(\x)\big).
\end{equation*}
Since
\begin{equation*}
|\rme^z- 1- z|\leqslant (\rme- 2)z^2 \quad \textrm{for}\quad |z|\leqslant 1,
\end{equation*}
by \eqref{gradient boundedness of f} we have
\begin{equation}\label{R_1 inside}
\big|R_1(\x, \y)\big|\leqslant (\rme- 2)a^2\big(f(\x)- f(\y)\big)^2\leqslant (\rme- 2)a^2|\x- \y|^2, \ \textrm{for}\ |\x- \y|\leqslant a^{-\frac12}.
\end{equation}
On the other hand, since $a< a_0< 1$, for $|\x- \y|> a^{-\frac12}$ the functions
\begin{equation*}
\big|\mathcal K(\x, \y)R_1(\x, \y)\big| \quad \textrm{and}\quad \big|\nabla_\x\mathcal K(\x, \y)R_1(\x, \y)\big|
\end{equation*}
are integrable in $\x$ or $\y$ with the integrals bounded by $C(a_0)a$, as follows from \eqref{G}, \eqref{decay of G}, and \eqref{gradient boundedness of f}. Since $f\in C^2(\mathbb R^3)$, by the Taylor formula we have
\begin{equation*}
f(\x)- f(\y)= (\nabla f)(\y)\cdot(\x- \y)+ \langle(\mathcal Df)\big(\xi\x+ (1- \xi)\y\big)(\x- \y),(\x- \y)\rangle_{\mathbb R^3},
\end{equation*}
where $\mathcal Df$ is the Hessian matrix (i. e. the matrix of the second partial derivatives of $f$) and $\xi\in [0, 1]$. Hence
\begin{equation}\label{R_2 estimate}
\big|R_2(\x, \y)\big|= a\Big|\langle(\mathcal Df)\big(\xi\x+ (1- \xi)\y\big)(\x- \y),(\x- \y)\rangle_{\mathbb R^3}\Big|\leqslant a\|\mathcal Df\|_{L_\infty}|\x- \y|^2,
\end{equation}
where $\|\mathcal Df\|_{L_\infty}$ is bounded uniformly in $\epsilon$ by \eqref{f} and \eqref{rho}.
Substituting \eqref{as Taylor} into \eqref{integral kernel 1} and \eqref{integral kernel 2}, and using the estimates \eqref{R_1 inside} --- \eqref{R_2 estimate} we obtain \eqref{estimate for the second term} by Theorems \ref{Stein theorem} and \ref{Schur test theorem}. This completes the proof of \eqref{main estimate}.

The proof of \eqref{variation of main estimate} is completely analogous since the integral kernel of
\begin{equation*}
\rme^{-af}[\Lambda_1, \rme^{af}]= \rme^{-af}\big[(\Lambda_1- 1/2), \rme^{af}\big]
\end{equation*}
is 
\begin{equation*}
\mathcal K(\x, \y)(\rme^{a(f(\y)- f(\x))}- 1)
\end{equation*}
(compare with \eqref{integral kernel 1}).

\section{Proof of Lemma~\ref{projection in ball lemma}}\label{proof of projection in ball section}
\markright{\hfill\thesection\quad Proof of Lemma~\ref{projection in ball lemma}\hfill}

Let $\eta\in C^\infty\big(\mathbb R^3, [0,1]\big)$ with
\begin{equation*}
\eta(\x)\equiv \begin{cases}0, &\x\in B_{2R},\\ 1, & \x\in \mathbb R^3\setminus B_{3R}.\end{cases}
\end{equation*}
Since $\Lambda_1$ is a bounded operator in $H^{1/2}(\mathbb R^3, \mathbb C^4)$, by Lemma~\ref{multiplicator lemma} we have
\begin{equation}\label{obvious thing}\begin{split}
\|\Lambda_1\psi\|_{H^{1/2}(B_R, \mathbb C^4)}&\leqslant \big\|\Lambda_1(1- \eta)\psi\big\|_{H^{1/2}(B_{R}, \mathbb C^4)}+ \|\Lambda_1\eta\psi\|_{H^{1/2}(B_R, \mathbb C^4)}\\ &\leqslant C(R)\|\psi\|_{H^{1/2}(B_{3R}, \mathbb C^4)}+ \|\Lambda_1\eta\psi\|_{H^1(B_R, \mathbb C^4)}.
\end{split}\end{equation}
By \eqref{G} we can estimate the second term on the \rhs of \eqref{obvious thing} as
\begin{equation*}\label{with G}\begin{split}
&\|\Lambda_1\eta\psi\|_{H^1(B_R, \mathbb C^4)}^2\\ &= \int\limits_{B_R}\bigg(\Big|\int\limits_{|\y|> 2R}K(\x, \y)\eta(\y)\psi(\y)\rmd\y\Big|^2+ \Big|\int\limits_{|\y|> 2R}\nabla_\x K(\x, \y)\eta(\y)\psi(\y)\rmd\y\Big|^2\bigg)\rmd\x\\ &\leqslant \frac43\pi R^3\sup_{\x\in B_R}\bigg(\Big|\int_{|\y|> 2R}K(\x, \y)\eta(\y)\psi(\y)\rmd\y\Big|^2\\ &+ \Big|\int_{|\y|> 2R}\nabla_\x K(\x, \y)\eta(\y)\psi(\y)\rmd\y\Big|^2\bigg)\\ &\leqslant \frac43\pi R^3\bigg(\int_{|\y|> 2R}\Big(\sup_{\x\in B_R}\big|K(\x, \y)\big|+ \sup_{\x\in B_R}\big|\nabla_\x K(\x, \y)\big|\Big)\big|\psi(\y)\big|\rmd\y\bigg)^2\\ & \leqslant \frac43\pi R^3\bigg(\int_{|\y|> 2R}G\big(|\y|- R\big)\big|\psi(\y)\big|\rmd\y\bigg)^2\\ &\leqslant \frac43\pi R^3\bigg(\int_{|\y|> 2R}G^{1- 2a}\big(|\y|- R\big)\rmd\y\bigg)\bigg(\int_{|\y|> 2R}G^{1+ 2a}\big(|\y|- R\big)\big|\psi(\y)\big|^2\rmd\y\bigg).
\end{split}\end{equation*}
Since $a< 1/2$ and $f(\x)\leqslant |\x|$, we conclude from \eqref{decay of G} that there exists $C(a, R)$ such that
\begin{equation*}
\|\Lambda_1\eta\psi\|_{H^1(B_R, \mathbb C^4)}\leqslant C(a, R)\|\rme^{-2af}\psi\|_{L_2(\mathbb R^3, \mathbb C^4)},
\end{equation*}
and \eqref{projection in ball estimate} follows by \eqref{obvious thing}.

\section{Proof of Lemma~\ref{Q_4 lemma}}\label{Q_4 proof section}
\markright{\hfill\thesection\quad Proof of Lemma~\ref{Q_4 lemma}\hfill}

For $j= 2, \dots, N$ we have
\begin{equation}\begin{split}\label{U_1j calculation}
&\langle\Lambda_1\rme^{af}\Lambda_1\rme^{af}\phi, U_{1j}\phi\rangle= \langle U_{1j}\rme^{af}\phi, \rme^{af}\phi\rangle\\ &+\langle U_{1j}\rme^{-af}[\Lambda_1, \rme^{af}]\Lambda_1\rme^{af}\phi, \rme^{af}\phi\rangle+ \langle U_{1j}[\Lambda_1, \rme^{af}]\phi, \rme^{af}\phi\rangle.
\end{split}\end{equation}
The first term on the \rhs of \eqref{U_1j calculation} is nonnegative by \eqref{U positivity}. Applying \eqref{U Hardy}, Lemma~\ref{main lemma}, and Schwarz inequality we can estimate the last two terms by $Ca\|\rme^{af}\phi\|^2$. Hence by \eqref{Q_4}
\begin{equation}\label{V_1 remains}
Q_4\leqslant Ca\|\rme^{af}\phi\|^2+ \big|\langle\Lambda_1\rme^{af}\Lambda_1\rme^{af}\phi, V_1\phi\rangle\big|
\end{equation}
and it remains to estimate the last term on the \rhs of \eqref{V_1 remains}.

Let $\chi_1\in C^\infty\big(\mathbb{R}^3, [0, 1]\big)$ be a function supported in $\mathbb{R}^3\setminus B_1$ such that it is equal to $1$ on $\mathbb{R}^3\setminus B_2$. For $R> 1$ let
\begin{equation*}\label{chi_R}
\chi_R(\mathbf{X}):= \chi_R(\x_1):= \chi_1(\x_1/R).
\end{equation*}
We have
\begin{equation}\label{Bs}\begin{split}
\big|\langle\Lambda_1\rme^{af}\Lambda_1\rme^{af}\phi, V_1\phi\rangle\big|&\leqslant \big|\langle \rme^{-af}\Lambda_1\rme^{af}\Lambda_1\rme^{af}\phi, \chi_RV_1\rme^{af}\phi\rangle\big|\\ &+ \big|\langle(1- \chi_R)\Lambda_1\rme^{af}\Lambda_1\rme^{af}\phi, V_1\phi\rangle\big|.
\end{split}\end{equation}
By Corollary~\ref{auxiliary corollary},
\begin{equation}\label{with exponents around}
\|\rme^{-af}\Lambda_1\rme^{af}\Lambda_1\rme^{af}\phi\|\leqslant C\|\rme^{af}\phi\|.
\end{equation}
Since $\chi_R$ is supported outside $B_R$, by \eqref{V_n decay again} we have
\begin{equation}\label{for V_1}
\|\chi_RV_1\rme^{af}\phi\|\leqslant \varepsilon(R)\|\rme^{af}\phi\|, \quad \varepsilon(R)\underset{R\to \infty}\longrightarrow 0.
\end{equation}
According to \eqref{upper estimate V},
\begin{equation}\label{by form boundedness}
\big|\langle(1- \chi_R)\Lambda_1\rme^{af}\Lambda_1\rme^{af}\phi, V_1\phi\rangle\big|\leqslant C\big\|(1- \chi_R)\Lambda_1\rme^{af}\Lambda_1\rme^{af}\phi\big\|_{H_1^{1/2}}\|\phi\|_{H_1^{1/2}}.
\end{equation}
Since $(1- \chi_R)$ is a smooth function supported in $\big\{|\x_1|\leqslant 2R\big\}$, by Lemmata~\ref{multiplicator lemma} and \ref{projection in ball lemma} we have
\begin{equation}\label{with lemma}\begin{split}
&\big\|(1- \chi_R)\Lambda_1\rme^{af}\Lambda_1\rme^{af}\phi\big\|_{H_1^{1/2}}\leqslant C(R)\|\Lambda_1\rme^{af}\Lambda_1\rme^{af}\phi\|_{H_1^{1/2}(B_{2R}\times\mathbb R^{3N- 3}, \mathbb C^{4^N})}\\ &\leqslant C(R)\|\rme^{af}\Lambda_1\rme^{af}\phi\|_{H_1^{1/2}(B_{6R}\times\mathbb R^{3N- 3}, \mathbb C^{4^N})}+ C(a, R)\|\rme^{-af}\Lambda_1\rme^{af}\phi\|_{L_2(\mathbb R^{3N}, \mathbb C^{4^N})}.
\end{split}\end{equation}
By Corollary~\ref{auxiliary corollary} the second term on the \rhs of \eqref{with lemma} can be estimated by $C(a, R)\|\phi\|$. Applying Lemma~\ref{projection in ball lemma} to the first term we obtain
\begin{equation}\label{again with lemma}\begin{split}
&C(R)\|\rme^{af}\Lambda_1\rme^{af}\phi\|_{H_1^{1/2}(B_{6R}\times\mathbb R^{3N- 3}, \mathbb C^{4^N})}\\ &\leqslant C(a, R)\|\Lambda_1\rme^{af}\phi\|_{H_1^{1/2}(B_{6R}\times\mathbb R^{3N- 3}, \mathbb C^{4^N})}\\ &\leqslant C(a, R)\|\rme^{af}\phi\|_{H_1^{1/2}(B_{18R}\times\mathbb R^{3N- 3}, \mathbb C^{4^N})}+ C(a, R)\|\rme^{-af}\phi\|_{L_2(\mathbb R^{3N}, \mathbb C^{4^N})}\\ &\leqslant C(a, R)\|\phi\|_{H^{1/2}(\mathbb R^{3N}, \mathbb C^{4^N})}.
\end{split}\end{equation}
Thus by \eqref{by form boundedness} --- \eqref{again with lemma}
\begin{equation}\label{B2 finished}
\big|\langle(1- \chi_R)\Lambda_1\rme^{af}\Lambda_1\rme^{af}\phi, V_1\phi\rangle\big|\leqslant C(a, R)\|\phi\|_{H^{1/2}}^2.
\end{equation}
Estimating the \rhs of \eqref{Bs} according to \eqref{with exponents around}, \eqref{for V_1}, and \eqref{B2 finished} and substituting the result into \eqref{V_1 remains} we obtain
\begin{equation*}\label{final good thing}
Q_4\leqslant Ca\|\rme^{af}\phi\|^2+ C\varepsilon(R)\|\rme^{af}\phi\|^2+ C(a, R)\|\phi\|_{H^{1/2}}^2.
\end{equation*}
Choosing $R$ so that $\varepsilon(R)\leqslant a$ we arrive at \eqref{Q_4 estimate}.
Lemma~\ref{Q_4 lemma} is proved.

\ack

The author was supported by the DFG grant SI 348/12--2 while working on his PhD (on a part of which this article is based), and by the EPSRC grant EP/F029721/1 while preparing this publication. Many thanks to S. Vugalter for numerous valuable discussions and suggestions.

\appendix

\section{Some properties of modified Bessel functions}\label{K-functions appendix}

The modified Bessel (McDonald) functions are related to the Hankel functions by the formula
\begin{equation*}
K_\nu(z)= \frac{\pi}{2}\rme^{i\pi(\nu+ 1)/2}H_\nu^{(1)}(iz).
\end{equation*}
These functions are positive and decreasing for $z\in (0, \infty)$. Their asymptotics are (see \cite{GradshteynRyzhik2000} 8.446, 8.447.3, 8.451.6)
\begin{equation}\label{asymptotics}\begin{split}
K_\nu(z)&= \sqrt\frac{\pi}{2z}\rme^{-z}\bigg(1+ \Or\Big(\frac{1}{z}\Big)\bigg), \quad z\rightarrow+\infty;\\
K_0(z)&= -\log z\big(1+ \mathrm o(1)\big), \quad K_1(z)= \frac{1}{z}\big(1+ \mathrm o(1)\big), \quad z\rightarrow +0.
\end{split}\end{equation}
The derivatives of these functions are (see \cite{GradshteynRyzhik2000} 8.486.12, 8.486.18)
\begin{equation}\label{derivatives}
K_0'(z)= -K_1(z), \quad K_1'(z)= -K_0(z)- \frac{1}{z}K_1(z), \quad z\in (0, \infty).
\end{equation}

\section*{References}
\bibliographystyle{unsrt}
\bibliography{Bericht}

\end{document}